\newtheorem{theorem}{Theorem}
\newtheorem{lemma}{Lemma}
\newtheorem{remark}{Remark}
\newcommand{\trans}{{\mathrm{T}}}
\newcommand{\E}{\operatorname{E}}
\renewcommand{\P}{\operatorname{P}}
\newcommand{\VaR}{\operatorname{V@R}}
\newcommand{\AVaR}{\operatorname{AV@R}}
\newcommand{\essinf}{\operatorname{essinf}}
\newcommand{\R}{\mathbb{R}}
\newcommand{\N}{\mathbb{N}}
\newcommand{\Var}{\operatorname{Var}}
\newcommand{\var}{\operatorname{var}}
\newcommand{\supp}{\operatorname{supp}}
\renewcommand{\phi}{\varphi}
\newcommand{\calF}{\mathcal{F}}
\newcommand{\calG}{\mathcal{G}}
\newcommand{\filF}{\mathbb{F}}
\newcommand{\filG}{\mathbb{G}}
\newcommand{\calB}{\mathcal{B}}
\newcommand{\calL}{\mathcal{L}}
\newcommand{\calP}{\mathcal{P}}
\newcommand{\cvfun}{\psi}
\newcommand{\crfun}{\nu}
\newcommand{\distr}{\stackrel{d}{\to}}
\newcommand{\weakly}{\stackrel{w}{\to}}
\title{Approximations of multi-period liability values by simple formulas}
\author{Nils Engler\footnote{nils.engler@math.su.se, Department of Mathematics, Stockholm University}, Filip Lindskog\footnote{lindskog@math.su.se, Department of Mathematics, Stockholm University}}
\begin{document}
\maketitle

\begin{abstract}
This paper is motivated by computational challenges arising in multi-period valuation in insurance. Aggregate insurance liability cashflows typically correspond to stochastic payments several years into the future. However, insurance regulation requires that capital requirements are computed for a one-year horizon, by considering cashflows during the year and end-of-year liability values. This implies that liability values must be computed recursively, backwards in time, starting from the year of the most distant liability payments. Solving such backward recursions with paper and pen is rarely possible, and numerical solutions give rise to major computational challenges. 

The aim of this paper is to provide explicit and easily computable expressions for multi-period valuations that appear as limit objects for a sequence of multi-period models that converge in terms of conditional weak convergence. Such convergence appears naturally if we consider large insurance portfolios such that the liability cashflows, appropriately centered and scaled, converge weakly as the size of the portfolio tends to infinity.  
\end{abstract}

\noindent {\bf Keywords}: valuation, multi-period models, conditional weak convergence

\section{Introduction and motivation}

This paper is motivated by problems arising in multi-period valuation in insurance, but the applicability of the results extends beyond the insurance setting. 
Aggregate insurance liability cashflows typically correspond to payments several years into the future. However, insurance regulation requires that capital requirements are calculated for a one-year horizon, considering cashflows during the year and end-of-year liability values.  
This implies that liability values must be computed recursively, backwards in time, starting from the year of the most distant liability payments.  
Solving such backward recursions leads to major computational challenges. Consequently, simple and explicit (standard) formulas with questionable conceptual soundness have become the preferred alternative. 
With this paper we say that there are alternatives to commonly encountered formulas that give rise to explicit formulas that are easy to use in practice and that retain both economic interpretability and the conceptual soundness of the original principles for insurance valuation. Mathematically, the explicit formulas follow from combining the original principles for insurance valuation with widely applicable large portfolio asymptotics.   

We begin by presenting the motivating insurance problem. 
Let $(C^{n_0},\filF^{n_0})$ be a stochastic processes in discrete time with $(C^{n_0}_t)_{t=1}^{T}$ denoting the discounted payments at times $t$ due to insurance claims, and $\filF^{n_0}=(\calF^{n_0}_t)_{t=0}^T$, with $\calF^{n_0}_0=\{\Omega,\emptyset\}$, denoting a filtration to which $C^{n_0}$ is adapted. $C^{n_0}$ describes discounted claim payments over time in a so-called runoff situation, when no new contracts are written. The number $n_0$ is intended as a measure of volume or exposure, for instance the number of contracts that may generate future claims payments. 
We want to assign a value to this liability cashflow. In order to do this we consider a sequence $(C^{n},\filF^n)_{n\geq 1}$ of such stochastic processes, with $\Vert C^{n} \Vert \to \infty$ almost surely as $n\to\infty$, where $\Vert \cdot \Vert$ denotes the Euclidean norm on $\R^T$.  
An example could be $C^{n}$ given by  
\begin{align*}
C^{n}_t = \sum_{k=1}^{M_n} 1_{\{D_k=t\}}Z_k, \quad t=1,\dots,T,
\end{align*}
where $M_n$ denotes the total number of claims payments and $M_n\to\infty$ almost surely as $n\to\infty$,    
$D_k$ denotes the time and $Z_k$ denotes the size of the $k$th claims payment. For this example, the filtration could be the one generated by the discounted cashflow $C^{n}$ or a larger filtration also including information about the number of payments $\sum_{k=1}^{M_n} 1_{\{D_k=t\}}$ at each point in time, or more. 

We assume nonrandom sequences with terms $a_n\in (0,\infty)$ and $b_n\in \R^T$ such that there is convergence in distribution 
\begin{align*}
X^n=a_n^{-1}(C^{n}-b_n)\distr X \quad \text{as } n\to\infty. 
\end{align*}
The value $V^n_0(C^{n})$ of the insurance liabilities in a multi-period model is the result of applying a suitable functional to the pair $(C^{n},\filF^n)$. Natural valuation functionals satisfy the property $V^n_0(C^{n})=a_nV^n_0(X^n)+\sum_{s=1}^{T}b_{n,s}$. Convergence $X^n\distr X$ suggests $V^n_0(X^n)\to V_0(X)$ and therefore the approximation 
\begin{align}\label{eq:value_approximation}
	V^n_0(C^{n})\approx a_nV_0(X)+\sum_{s=1}^{T}b_{n,s}.  
\end{align}
However, the flow of information over time is an essential ingredient in valuation and it is not true that convergence $X^n\distr X$ implies the convergence $V^n_0(X^n)\to V_0(X)$ (even if $\filF^n$ is taken to be the filtration generated by $C^{n}$). Consequently, one of the main objectives of the present paper is to determine sharp sufficient conditions for the convergence of $V^n_0(X^n)$ to $V_0(X)$ in terms of an appropriate mode of convergence of $(X^n,\filF^n)$ to $(X,\filF)$, similar to so-called extended weak convergence introduced by Aldous in \cite{Aldous-81} and conditional weak convergence studied by Sweeting in \cite{Sweeting-89}.  
The importance of the approximation \eqref{eq:value_approximation} is because $V^n_0(C^{n})$ is typically very difficult to compute numerically whereas $V_0(X)$ is easier to compute numerically and, more importantly, in the case of a Gaussian limit model, is given by an explicit expression in terms of conditional variances of components of $X$ (note that Gaussian vectors have the rare feature that conditional variances of one component given a subset of components are nonrandom).   

The paper is organized as follows. Section \ref{sec:preliminaries} introduces notation and basic properties of conditional distributions and risk measures. Section \ref{sec:compv} presents the main contents of the paper and, following a general presentation of the mathematical setup, gives economic motivation of key quantities in Section \ref{sec:economicmotivation} and presents the main results in Section \ref{sec:mainresults}. All proofs together with auxiliary results are found in Section \ref{sec:proofs}.   

\section{Preliminaries}\label{sec:preliminaries} 

$\N=\{1,2,\dots\}$, $\R$ denotes the real numbers and $\R_+=[0,\infty)$. Whenever relevant, for $T\in \N$, a vector $x\in \R^T$ is assumed to be column vector and its transpose $x^{\trans}$ a row vector. For $x=(x_1,\dots,x_T)^{\trans}\in \R^T$, ${\Vert x \Vert}^2=x^{\trans}x$ and we let $x_{\leq t}=(x_1,\dots,x_t)^{\trans}$ and $x_{> t}=(x_{t+1},\dots,x_T)^{\trans}$.  
For $d\in\N$ and a Borel set $A\in\calB(\R^d)$, $\calP(A)$ denotes the set of probability measures on $A$. For $\mu\in\calP(A)$, $\supp(\mu)$ denotes its support. 
We consider a probability space $(\Omega,\calF,\P)$. 
For a $\sigma$-algebra $\calF_t\subset \calF$, $L^0(\calF_t,\P)$ denotes the vector space of all real-valued $\calF_t$-measurable random variables, and for $p\in (0,\infty)$, $L^p(\calF_t,\P)$ denotes the subset $\{Y\in L^0(\calF_t,\P):\E[|Y|^p]<\infty\}$.  

For an $\R^d$-valued random variable $Y$ we let $\calL(Y)$ denote its distribution, i.e.~the induced probability measure $\P(Y\in \cdot)$ on $\R^d$. Given $\sigma$-algebras $\calF_t\subset \calF$ and a random variable $Y$, a regular conditional distribution $Q_{\calF_t,Y}$ is a version of $\P(Y\in \cdot\mid \calF_t)$ which forms a probability kernel from $(\Omega,\calF_t)$ to $(\R^d,\calB(\R^d))$.  
If $\calF_t=\sigma(Z)$ for an $\R^{d'}$-valued random variable $Z$, then $Q_{\calF_t,Y}$ is an $\calF_t$-measurable random measure $\omega\mapsto\kappa(Z(\omega),\cdot)$ on $\R^d$, where $\kappa$ is a probability kernel from $(\R^{d'},\calB(\R^{d'}))$ to $(\R^d,\calB(\R^d))$, see Theorem 6.3 in Kallenberg \cite{Kallenberg-02}. The notation $\calL(Y\mid Z=z)$ means $\kappa(z,\cdot)$. In particular, $z\mapsto\calL(Y\mid Z=z)$ is well defined on $\supp(\calL(Z))$.   

We define the conditional $p$-quantile of $Y\in L^0(\calF,\P)$ given $\calF_t$ as the random variable $F_{Y\mid \calF_t}^{-1}(p)$ given by  
\begin{align*}
\omega\mapsto \min\{m\in \R:Q_{\calF_t,Y}(\omega,(-\infty,m])\geq p\}. 
\end{align*}
If $Q_{\calF_t,Y}(\omega,\cdot)=\kappa(Z(\omega),\cdot)$, then $F_{Y\mid Z=z}^{-1}(p)$ means $\min\{m\in \R:\kappa(z,(-\infty,m])\geq p\}$ and is well defined on $\supp(\calL(Z))$. 
Similar to ordinary quantiles, the conditional quantiles have the property $F_{aY+b\,\mid \calF_t}^{-1}(p)=aF_{Y\mid \calF_t}^{-1}(p)+b$ if $a\in\R_+$ and $b\in L^0(\calF_t,\P)$ (in particular if $b\in\R$).  
Monetary conditional risk measures appear naturally for multi-period valuations, see e.g.~Chapter 11 in F\"ollmer and Schied \cite{foell2016}.  
The risk measures Value-at-Risk and Average Value-at-Risk, conditional on $\calF_t$ and for $u\in (0,1)$, are given by 
\begin{align*}
\VaR_{u}(Y\mid \calF_t)&=F_{-Y\mid \calF_t}^{-1}(1-u), \\
\AVaR_{u}(Y\mid \calF_t)&=\frac{1}{u}\int_0^{u}\VaR_{v}(Y\mid\calF_t)dv. 
\end{align*} 
In \cite{foell2016}, $\VaR_{u}(Y\mid \calF_t)$ is defined as $\essinf\{m_t\in L^0(\calF_t,\P):\P(Y+m_t<0\mid\calF_t)\leq u\}$ but the two definitions of $\VaR_{u}(Y\mid \calF_t)$ are equal almost surely. For $p\geq 1$, $\VaR_{u}(Y\mid \calF_t)\in L^p(\calF_t,\P)$ if $Y\in L^p(\calF,\P)$, and similarly for $\AVaR_{u}(Y\mid \calF_t)$. 
Let $\calP([0,1])'$ denote the subset of $\calP([0,1])$ consisting of measures $\mu\in\calP([0,1])$ that either have a bounded density with respect to Lebesgue measure or satisfy $\supp(\mu)\subset [a,b]$ for some $0<a<b<1$. 
Both $\VaR_{u}(Y\mid \calF_t)$ and $\AVaR_{u}(Y\mid \calF_t)$ can be expressed as 
\begin{align}\label{eq:quantriskmeas}
\rho(Y\mid \calF_t)=\int_0^1 F_{-Y\mid \calF_t}^{-1}(p)\mu(dp), \quad \mu\in \calP([0,1])'.
\end{align}
$\VaR_{u}(Y\mid \calF_t)$ corresponds to $\mu(dp)=\delta_{1-u}(dp)$ (a unit point mass at $1-u$) and $\AVaR_{u}(Y\mid \calF_t)$ corresponds to $\mu(dp)=u^{-1}I\{p\in [1-u,1]\}(dp)$ (a bounded density).  
Conditional risk measures of the form \eqref{eq:quantriskmeas} satisfy $\rho(aY+b\mid \calF_t)=a\rho(Y\mid \calF_t)-b$ if $a\in\R_+$ and $b$ is $\calF_t$-measurable, called positive homogeneity and conditional cash additivity. Moreover, they are monotone: $\widetilde{Y}\geq Y$ implies $\rho(\widetilde{Y}\mid \calF_t)\leq \rho(Y\mid \calF_t)$. Throughout the paper (in)equalities between random variables should be interpreted in the almost sure sense. 

\section{Convergence of multi-period valuations}\label{sec:compv}

Fix $T\in\N$ and let $X,X^1,X^2,\dots$ be random vectors in $\R^T$ and let $Y,Y^1,Y^2,\dots$ be random vectors in $(\R^d)^T$, $d\in\N$. 
Suppose that $\calL(X^n,Y^n)\weakly \calL(X,Y)$ as $n\to\infty$, where $\calL(X,Y)$ is Gaussian.    
Set $\calF_t=\sigma((X,Y)_{\leq t})$, $\calF^n_t=\sigma((X^n,Y^n)_{\leq t})$. $(X,Y)$ and $(X^n,Y^n)$ are adapted discrete-time stochastic processes with respect to the filtrations $(\calF_t)_{t=0}^T$ and $(\calF^n_t)_{t=0}^T$, respectively, where $\calF_0=\calF^n_0=\{\Omega,\emptyset\}$. 
For each $n$, $X^n=(X^n_t)_{t=1}^T$ corresponds to a discounted incremental cashflow in a multi-period model with $T$ periods and time points $0,1,\dots,T$. 
For each $n$, $Y^n=(Y^n_t)_{t=1}^T$ corresponds to a stochastic process that provides additional information, additional to the information provided by the process $X^n$. Taking $Y^n$ to be nonrandom means that no such additional information is considered and that $(\calF^n_t)_{t=0}^T$ is the natural filtration generated by $X^n$.  
The discounted value at time $t$ of the cashflow occurring after time $t$ is denoted by $V^n_t(X^n)$. Since no cashflows occur after time $T$, $V^n_T(X^n)=0$. The process of values of the cashflow $X^n$ is $(V^n_t(X^n))_{t=0}^T$. Both $X^n_t$ and $V^n_t(X^n)$ are $\calF^n_t$-measurable.  
The processes $(V_t(X))_{t=0}^T$ and $(V^n_t(X^n))_{t=0}^T$ are defined backward recursively by 
\begin{align}
& V_T(X)=0, \quad V_t(X)=\phi_t(X_{t+1}+V_{t+1}(X)), \quad t<T, \label{eq:VtVal}\\
& V^n_T(X^n)=0, \quad V^n_t(X^n)=\phi^n_t(X^n_{t+1}+V^{n}_{t+1}(X^n)), \label{eq:VntVal} \quad t<T, 
\end{align}
where $\phi_t : L^1(\calF_{T})\to L^1(\calF_{t})$ and $\phi^n_t : L^1(\calF^n_{T})\to L^1(\calF^n_{t})$ are mappings that are made precise below. In particular, the mappings $\phi_t$ and $\phi^n_t$ will satisfy
\begin{align}\label{eq:phitab}
\phi_t(aY+\widetilde{Y})=a\phi_t(Y)+\widetilde{Y}, \quad 
 \phi^n_t(aY+\widetilde{Y})=a\phi^n_t(Y)+\widetilde{Y} \quad \text{for } a\in\R_+, \widetilde{Y}\in L^1(\calF_t). 
\end{align}
These properties are called positive homogeneity (the effect of multiplication by a nonnegative scalar) and conditional cash additivity. 
As a consequence, for $a\in\R_+, b\in \R^T$, 
\begin{align}\label{eq:Vtab}
V_t(aX+b)=aV_t(X)+\sum_{s=t+1}^Tb_s, \quad 
V^n_t(aX^n+b)=aV^n_t(X^n)+\sum_{s=t+1}^Tb_s. 
\end{align}
Theorems \ref{thm:convergentvaluations} and \ref{thm:CondExpVar} presented below essentially say that if $\calL(X^n,Y^n)\weakly \calL(X,Y)$ as $n\to\infty$ and also in terms of conditional distributions (see e.g.~\eqref{eq:contconv}) together with a uniform integrability assumption (see e.g.~\eqref{eq:condxnui}), then $\lim_{n\to\infty}V^n_0(X^n)=V_0(X)$, where the values $V^n_0(X^n)$ and $V_0(X)$ are defined with respect to the filtrations generated by $(X^n,Y^n)$ and $(X,Y)$, respectively. The essential point is that $V^n_0(X^n)$ is typically impossible to compute analytically and only with significant difficulties numerically, whereas $V_0(X)$ has an explicit expression that is straightforward to compute in terms of the mean vector and covariance matrix of the Gaussian weak limit (see e.g.~\eqref{eq:V0expression}). 
Therefore, the theorems we present enable the use of conceptually sound valuation techniques without the significant efforts otherwise needed to obtain numerical solutions to backward recursions. 

\subsection{Economic motivation}\label{sec:economicmotivation}

Current regulatory frameworks for the insurance industry prescribe so-called cost-of-capital valuation of liability cashflows. 
Such approaches to valuation consider capital requirements and the costs stemming from the financing of buffer capital. In the multi-period setting, with capital requirements determined one period at the time, the randomness of future capital requirements and associated costs drive the liability valuation. 

Multiperiod cost-of-capital valuation is studied in 
Salzmann and W\"uthrich \cite{SalzmannWuthrich-10}, 
M\"ohr \cite{mhr11},  
Pelsser and Salahnejhad Ghalehjooghi \cite{Pelsser-SG-16} 
and  
Engsner et al. \cite{engsner2017}. 
A common theme is that multiperiod valuations is constructed through backward induction of one-period valuations. 
Pelsser and Salahnejhad Ghalehjooghi \cite{Pelsser-SG-16} study continuous-time limits of multiperiod valuations defined in terms of one-step valuations, similar to those considered below and including cost-of-capital valuation. In \cite{Pelsser-SG-16} the convergence takes place as the length of the time periods tends to zero and the number of time periods tends to infinity. Clearly, this is a different kind of convergence than that studied here. 
Multiperiod valuations are so-called time-consistent by their construction through the backward induction of one-period valuations. For mathematical properties of general multiperiod valuations, we refer to 
Cheridito et al \cite{Cheridito-et-al-06}, Artzner et al. \cite{Artzner-et-al-07} and Jobert and Rogers \cite{JobertRogers-08}.  

We now explain the basic ingredients of multi-period cost-of-capital valuation. 
Let $V_0$ be the value at time $0$ of the liability cashflow $X$. This amount should be interpreted as the capital that needs to be transferred along with the liability to another external agent in order for the external agent (or capital provider) to accept managing the liability runoff and the associated capital costs.  
Let $V_{t}$ denote the value at time $t$ of the liability cashflows $X_{>t}$. The capital $V_{t}$ is reserved at time $t$ for managing the liability. However, regulation requires the capital $R_t=\rho(-X_{t+1}-V_{t+1}\mid \calF_t)>V_{t}$ to be set aside at time $t$. The difference $R_t-V_t$ is made available by a capital provider requiring an excess expected return $1+\eta_t$ on the provided capital between time $t$ and $t+1$. The acceptability criterion under which the capital provider accepts to provide capital gives the equation 
\begin{align}\label{eq:accr_CoCnoLL}
\E[R_t-X_{t+1}-V_{t+1}\mid \calF_t]=(1+\eta_t)(R_t-V_t). 
\end{align}  
Solving for $V_t$ yields $V_t=\phi_t(X_{t+1}+V_{t+1})$, where 
\begin{align}\label{eq:phiCoCnoLL}
\phi_t(Y)=\frac{1}{1+\eta_t}\E[Y\mid \calF_t]+\frac{\eta_t}{1+\eta_t}\rho(-Y\mid \calF_t).
\end{align}
A commonly used conditional risk measure for variables with finite variance is $\rho^{\text{SD}}(Y\mid \calF_t)=\E[-Y\mid \calF_t]+c^{\text{SD}}\var(Y\mid \calF_t)^{1/2}$. Note that it can be argued that $\rho^{\text{SD}}(Y\mid \calF_t)$ is an inappropriate choice because it violates the monotonicity property: $\widetilde{Y}\geq Y$ does not imply $\rho^{\text{SD}}(Y\mid \calF_t)\geq \rho^{\text{SD}}(\widetilde{Y}\mid \calF_t)$. For this choice of conditional risk measure, \eqref{eq:phiCoCnoLL} takes the form
\begin{align}\label{eq:phi_esd}
\phi_t(Y)=\E[Y\mid \calF_t]+c_t\var(Y\mid \calF_t)^{1/2},
\end{align}
where $c_t=c^{\text{SD}}\eta_t/(1-\eta_t)$, and $\phi_t(Y)\in L^2(\calF_t)$ if $Y\in L^2(\calF_T)$.  

It can be argued that the capital provider, seen as the share holder of the company, has limited liability and is not required to continue injecting capital if the the value $R_t$ of available asset turns out insufficient to match the value $X_{t+1}+V_{t+1}$ of the liability towards the policy holders. 
For discussions on limited liability in the context of cost-of-capital valuation we refer to Albrecher et al. \cite{Albrecher-et-al-22} and M\"ohr \cite{mhr11}.  
In the setting with limited liability, \eqref{eq:accr_CoCnoLL} is replaced by 
\begin{align}\label{eq:accr_CoCLL}
\E[(R_t-X_{t+1}-V_{t+1})^{+}\mid \calF_t]=(1+\eta_t)(R_t-V_t)
\end{align}    
and \eqref{eq:phiCoCnoLL} is replaced by 
\begin{align}\label{eq:phiCoCLL}
\phi_t(Y)=\rho(-Y\mid \calF_t)-\frac{1}{1+\eta_t}\E[(\rho(-Y\mid\calF_t)-Y)^{+}\mid \calF_t]. 
\end{align}
Regardless of whether we consider cost-of-capital valuation with or without limited liability, it follows that $\phi_t(aY+\widetilde{Y})=a\phi_t(Y)+\widetilde{Y}$ for $a\in\R_+$ and $\widetilde{Y}\in L^1(\calF_t)$ if the conditional risk measure satisfies $\rho(aY+\widetilde{Y}\mid \calF_t)=a\rho(Y\mid \calF_t)-\widetilde{Y}$. Moreover, $\phi_t$ inherits monotonicity from the conditional risk measure: if $\widetilde{Y}\geq Y$ implies $\rho(Y\mid \calF_t)\geq \rho(\widetilde{Y}\mid \calF_t)$, then $\phi_t(\widetilde{Y})\geq \phi_t(Y)$. 

An alternative to an acceptability criterion based on expected excess return and a cost-of-capital rate is an acceptability criterion saying that a risk averse capital provider provides capital if the payoff resulting from providing capital is preferred, in terms of expected utility, to simply rolling this capital forward by investing it in a riskless bond (or more generally, investing it in the numeraire asset). In this setting, the acceptability criterion \eqref{eq:accr_CoCLL} is replaced by 
\begin{align}\label{eq:accr_EULL}
\E[u_t((R_t-X_{t+1}-V_{t+1})^{+})\mid \calF_t]=u_t(R_t-V_t), 
\end{align}    
where $u_t$ is an increasing and concave (utility) function.  
Consequently, \eqref{eq:phiCoCLL} is replaced by 
\begin{align}\label{eq:phiEULL}
\phi_t(Y)=\rho(-Y\mid \calF_t)-u_t^{-1}\Big(\E\Big[u_t((\rho(-Y\mid\calF_t)-Y)^{+})\mid\calF_t\Big]\Big).
\end{align}
Also in this case $\phi_t$ inherits monotonicity and the property $\phi_t(Y+\widetilde{Y})=\phi_t(Y)+\widetilde{Y}$, $\widetilde{Y}\in L^1(\calF_t)$, from a monotone and conditionally cash additive conditional risk measure. However, the property $\phi_t(aY+\widetilde{Y})=a\phi_t(Y)+\widetilde{Y}$, $a\in\R_+$, requires that $u_t$ is chosen a power utility function $u_t(x)=\alpha_t x^{\beta_t}$, where $\alpha_t>0$ and $\beta_t\in (0,1]$. In this case, $\phi_t$ in \eqref{eq:phiEULL} takes the form
\begin{align}\label{eq:phiPULL}
\phi_t(Y)=\rho(-Y\mid \calF_t)-\E\Big[\big((\rho(-Y\mid\calF_t)-Y)^{+}\big)^{\beta_t}\mid\calF_t\Big]^{1/\beta_t}.
\end{align} 
The mappings $\phi_t:L^1(\calF_T)\to L^1(\calF_t)$ given by \eqref{eq:phiCoCnoLL}, \eqref{eq:phiCoCLL} or \eqref{eq:phiPULL} all satisfy $\phi_t(Y+\widetilde{Y})=\phi_t(Y)+\widetilde{Y}$ whenever $\widetilde{Y}\in L^1(\calF_t)$. Therefore, \eqref{eq:VtVal} can be expressed as, with $\circ$ denoting composition, 
\begin{align*}
V_T(X)=0, \quad V_t(X)=\phi_t \circ \dots \circ \phi_{T-1}\bigg(\sum_{s=t+1}^{T}X_s\bigg), \quad t < T.
\end{align*} 
These properties also hold for the mappings $\phi_t$ in \eqref{eq:phi_esd} with $L^1$ replaced by $L^2$. 

\subsection{Main results}\label{sec:mainresults}

The main results consist of Theorems \ref{thm:convergentvaluations}, \ref{thm:CondExpVar} and \ref{thm:V0monotonicity}. 
Theorems \ref{thm:convergentvaluations} and \ref{thm:CondExpVar} presents conditions under which we have convergence for multi-period values (that are typically not computable) to a computable explicit limit. Theorem \ref{thm:V0monotonicity} presents a monotonicity result for the limit expressions in terms of a partial order between filtrations.  

\begin{theorem}\label{thm:convergentvaluations}
Let $X,X^1,X^2,\dots$ be random vectors in $\R^T$ and let $Y,Y^1,Y^2,\dots$ be random vectors in $(\R^d)^T$, $d\in \N$. 
Suppose that $\calL(X^n,Y^n)\weakly \calL(X,Y)$ as $n\to\infty$, where $\calL(X,Y)$ is Gaussian, and that $(\calL(\Vert X^n\Vert))_{n\in\N}$ is uniformly integrable.  
Suppose also that, for each $t$ and each convergent sequence $(x^n,y^n)_{\leq t}\to (x,y)_{\leq t}$ with $(x^n,y^n)_{\leq t}\in \supp(\calL((X^n,Y^n)_{\leq t}))$,  
\begin{align}
& \calL\Big((X^n,Y^n) \mid (X^n,Y^n)_{\leq t}=(x^n,y^n)_{\leq t}\Big) \weakly \calL\Big((X,Y) \mid (X,Y)_{\leq t}=(x,y)_{\leq t}\Big) 
\quad \text{as } n\to\infty \label{eq:contconv}, \\ 
& \Big(\calL\big(\Vert X^n\Vert \mid (X^n,Y^n)_{\leq t}=(x^n,y^n)_{\leq t}\big)\Big)_{n\in\N} \quad \text{is uniformly integrable}. \label{eq:condxnui} 
\end{align}
For all $t$, let $V_t(X)$ and $V^n_t(X^n)$ be given by \eqref{eq:VtVal} and \eqref{eq:VntVal} with $\phi_t$ and $\phi^n_t$ given by either 
\eqref{eq:phiCoCnoLL} and 
\begin{align}\label{eq:phiCoCnoLL_n}
\phi^n_t(Y)=\frac{1}{1+\eta_t}\E[Y\mid \calF^n_t]+\frac{\eta_t}{1+\eta_t}\rho(-Y\mid \calF^n_t), \quad \eta_t \in \R_+,
\end{align}
or by \eqref{eq:phiCoCLL} and 
\begin{align}\label{eq:phiCoCLL_n}
\phi^n_t(Y)=\rho(-Y\mid \calF^n_t)-\frac{1}{1+\eta_t}\E\Big[\big(\rho(-Y\mid\calF^n_t)-Y\big)^{+}\mid \calF^n_t\Big], \quad \eta_t \in \R_+,
\end{align}
or by \eqref{eq:phiPULL} and 
\begin{align}\label{eq:phiPULL_n}
\phi^n_t(Y)=\rho(-Y\mid \calF^n_t)-\E\Big[\big((\rho(-Y\mid\calF^n_t)-Y)^{+}\big)^{\beta_t}\mid\calF^n_t\Big]^{1/\beta_t}, \quad \beta_t\in (0,1],
\end{align} 
where $\calF_t=\sigma((X,Y)_{\leq t})$ and $\calF^n_t=\sigma((X^n,Y^n)_{\leq t})$, and where $\rho(-Y\mid \calF_t)$ and $\rho(-Y\mid \calF^n_t)$ are of the form \eqref{eq:quantriskmeas}. 
Then $\lim_{n\to\infty}V^n_0(X^n)=V_0(X)$, where 
\begin{align}\label{eq:V0expression}
V_0(X)=\E\bigg[\sum_{t=1}^T X_t\bigg]+\sum_{t=1}^T\phi_{t-1}(\varepsilon_t)\bigg(\Var\bigg(\sum_{u=t}^TX_u\mid (X,Y)_{\leq t-1}\bigg)
-\Var\bigg(\sum_{u=t}^TX_u\mid (X,Y)_{\leq t}\bigg)\bigg)^{1/2},
\end{align}
where $\varepsilon_t$ is standard normally distributed and independent of $\calF_{t-1}$. 
\end{theorem}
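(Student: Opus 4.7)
The plan has two parts. First, I would derive the explicit closed-form expression \eqref{eq:V0expression} in the Gaussian limit model by a downward induction on $t$. Second, I would establish $V^n_0(X^n)\to V_0(X)$ by a separate downward induction that propagates a ``continuous convergence'' of $V^n_t$ to $V_t$ along the recursion.

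For the formula, I would prove by downward induction that
\begin{align*}
V_t(X) = \E\Big[\sum_{u=t+1}^T X_u \,\Big|\, \calF_t\Big] + \sum_{s=t+1}^T \phi_{s-1}(\varepsilon_s)\sqrt{\Delta_s},
\end{align*}
where $\Delta_s := \Var(\sum_{u\geq s} X_u\mid \calF_{s-1}) - \Var(\sum_{u\geq s} X_u\mid \calF_s)$. Two Gaussian features drive the calculation: all such conditional variances are nonrandom, and by the law of total variance $\E[\sum_{u\geq s}X_u\mid \calF_s]$ equals $\E[\sum_{u\geq s}X_u\mid \calF_{s-1}]+\sqrt{\Delta_s}\,\varepsilon_s$ for some $\varepsilon_s\sim N(0,1)$ independent of $\calF_{s-1}$ (independence uses joint Gaussianity). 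Combined with positive homogeneity and conditional cash additivity of $\phi_{s-1}$, together with the observation that $\phi_{s-1}(\varepsilon_s)$ is a deterministic constant (since $\varepsilon_s\perp \calF_{s-1}$ collapses every conditional operation in \eqref{eq:phiCoCnoLL}, \eqref{eq:phiCoCLL} and \eqref{eq:phiPULL} to an unconditional one), this gives the inductive step, and evaluation at $t=0$ yields \eqref{eq:V0expression}.

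For the convergence, I would represent $V^n_t(X^n)=v^n_t((X^n,Y^n)_{\leq t})$ and $V_t(X)=v_t((X,Y)_{\leq t})$ for Borel functions $v^n_t, v_t$, and prove by downward induction the continuous-convergence statement: $v^n_t(w^n)\to v_t(w)$ whenever $w^n\in\supp\calL((X^n,Y^n)_{\leq t})$ with $w^n\to w$. The base case $t=T$ is trivial, and the case $t=0$ reduces to the theorem since both filtrations are trivial. For the inductive step at level $t-1$, I would express $\phi^n_{t-1}(X^n_t+V^n_t(X^n))$ evaluated at $w^n_{\leq t-1}$ as a functional of the conditional distribution $\calL((X^n,Y^n)\mid (X^n,Y^n)_{\leq t-1}=w^n_{\leq t-1})$, apply \eqref{eq:contconv} to get weak convergence of these conditional laws, combine with the inductive hypothesis on $v^n_t$ via a Slutsky-type argument to obtain weak convergence of the conditional distribution of $X^n_t+V^n_t(X^n)$, and then upgrade weak convergence to convergence of the conditional mean, the conditional quantile-based risk measure \eqref{eq:quantriskmeas} (the Gaussian limit quantile function is continuous everywhere, permitting integration against $\mu\in\calP([0,1])'$), and the nonlinear operations $(\cdot)^+$ and $(\cdot)^{\beta_{t-1}}$ appearing in \eqref{eq:phiCoCLL_n} and \eqref{eq:phiPULL_n}---all via the uniform integrability hypothesis \eqref{eq:condxnui}.

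The main obstacle is verifying that the UI hypothesis \eqref{eq:condxnui} on $\|X^n\|$ is strong enough to dominate every intermediate quantity encountered in the backward recursion, in particular $V^n_t(X^n)$, $\rho(-(X^n_t+V^n_t(X^n))\mid\calF^n_{t-1})$, and the positive parts and power transforms in the limited-liability and power-utility variants. I would isolate auxiliary lemmas showing that each family of $\phi^n_t$ preserves conditional uniform integrability and continuous convergence on supports, and that $\rho$ of the form \eqref{eq:quantriskmeas} is a continuous functional of distributions under weak convergence plus the appropriate uniform integrability controlled by $\mu\in\calP([0,1])'$.
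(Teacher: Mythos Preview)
Your plan matches the paper's approach closely: the paper also splits into (i) an explicit Gaussian formula proved by backward induction (their Lemma~3, essentially identical to your first paragraph) and (ii) a backward-induction argument establishing continuous convergence $\cvfun^n_t(z^n_{\leq t})\to\cvfun_t(z_{\leq t})$ along convergent sequences in the supports, using the generalized continuous mapping theorem and Pratt's lemma for the quantile integrals.

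The one underspecified point in your proposal is exactly the one you flag as the ``main obstacle'', and it is worth making precise because it changes the shape of the induction. The raw hypothesis \eqref{eq:condxnui} controls $\Vert X^n\Vert$, not $X^n_t+V^n_t(X^n)$, so you cannot simply invoke it at each step; uniform integrability must be \emph{carried as part of the induction hypothesis}. The paper therefore runs a joint induction on two statements: continuous convergence \textbf{CC($t$)} and a UI statement \textbf{UI($t$)} asserting that $\bigl(\calL(\cvfun^n_t(Z^n_{\leq t})\mid Z^n_{\leq s}=z^n_{\leq s})\bigr)_n$ is uniformly integrable for \emph{every} earlier level $s=1,\dots,t-1$ and every convergent $z^n_{\leq s}\to z_{\leq s}$. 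The quantification over all $s<t$ (not just $s=t-1$) is essential: to push UI from level $t+1$ to level $t$ one bounds $|\cvfun^n_t(z)|\leq c\,\E[|\cvfun^n_{t+1}(Z^n_{\leq t+1})|\mid Z^n_{\leq t}=z]$ and then needs a tower-property lemma (their Lemma~4) saying that if $(W^n)$ is conditionally UI given $Z^n_{\leq s}$, then so is $\widetilde W^n=\E[W^n\mid Z^n_{\leq t}]$. Your proposed ``auxiliary lemmas showing that each $\phi^n_t$ preserves conditional uniform integrability'' would have to encode precisely this, and the induction hypothesis should be strengthened accordingly.
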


\begin{remark}
Notice that the conditional variances in \eqref{eq:V0expression} are nonrandom since the (joint) distribution of $(X,Y)$ is Gaussian. 
Notice also $\calF_t=\sigma((X,Y)_{\leq t})$ generates the natural filtration of $X$ if $Y$ is chosen as nonrandom (a degenerate Gaussian process). We emphasize that if the $\eta_t$ in \eqref{eq:phiCoCnoLL} and \eqref{eq:phiCoCLL} do not depend on $t$, and similarly if the $\beta_t$ in \eqref{eq:phiPULL_n} do not depend on $t$, then $\phi_{t-1}(\varepsilon_t)=\phi_0(\varepsilon_1)$ in \eqref{eq:V0expression} does not depend on $t$. 
\end{remark}

\begin{remark}
Joint weak convergence $\calL(X^n,Y^n)\weakly \calL(X,Y)$ as $n\to\infty$ does not imply conditional weak convergence such as the convergence in \eqref{eq:contconv}. A counterexample as well as sufficient conditions for conditional weak convergence are presented in \cite{Sweeting-89}. 
\end{remark}

\begin{remark}
For $\varepsilon$ is standard normally distributed and independent of $\calF_{t-1}$, 
\begin{align*}
\VaR_{u}(\varepsilon \mid \calF_{t-1}) = \Phi^{-1}(1-u), \quad \AVaR_{u}(\varepsilon \mid \calF_{t-1}) = \frac{1}{u}\phi\left(\Phi^{-1}(1-u)\right),
\end{align*}
where $\phi$ and $\Phi$ here denote the standard normal density and distribution function, respectively. Hence, with $\rho_0(\varepsilon)$ denoting either $\VaR_{u}(\varepsilon \mid \calF_{t-1})$ or $\AVaR_{u}(\varepsilon \mid \calF_{t-1})$ with values above that do not depend on $t$, we get 
$$
\phi_{t-1}(\varepsilon) = \frac{\eta_{t-1}}{1 + \eta_{t-1}} \rho_0(\varepsilon)
$$
in the case of \eqref{eq:phiCoCnoLL}, and 
\begin{align*}
\phi_{t-1}(\varepsilon) = \rho_0(\varepsilon) - \frac{1}{1 + \eta_{t-1}}\Big(\rho_0(\varepsilon)\Phi(\rho_0(\varepsilon)) + \phi(\rho_0(\varepsilon)) \Big)\leq \frac{\eta_{t-1}}{1 + \eta_{t-1}} \rho_0(\varepsilon) 
\end{align*}
in the case of \eqref{eq:phiCoCLL}. In case of \eqref{eq:phiPULL}, the expectation $\E[((\rho_0(\varepsilon)-\varepsilon)^+)^{\beta_{t-1}}]$ has to be computed numerically in order to compute $\phi_{t-1}(\varepsilon)$.   
\end{remark}

\begin{remark}\label{rem:noll}
As stated in Theorem \ref{thm:convergentvaluations}, the result holds when $\phi_t$ and $\phi^n_t$ are given by \eqref{eq:phiCoCnoLL} and \eqref{eq:phiCoCnoLL_n}. However, the theorem is actually proven for mappings of a more general kind, namely, for $\lambda_t\in [0,1]$ and $\mu^1_t,\mu^2_t\in \calP([0,1])'$, the mappings $\phi_t : L^1(\calF_{T})\to L^1(\calF_{t})$, $\phi^n_t : L^1(\calF^n_{T})\to L^1(\calF^n_{t})$  given by \eqref{eq:phialtmapping} and \eqref{eq:phinaltmapping} below:    
\begin{align}
\phi_t(Y)&=\lambda_t\int F^{-1}_{Y \mid \calF_t}(p)\mu_t^1(dp)+(1-\lambda_t)\int F^{-1}_{-Y \mid \calF_t}(p)\mu_t^2(dp), \label{eq:phialtmapping}\\
\phi^n_t(Y)&=\lambda_t\int F^{-1}_{Y \mid \calF^n_t}(p)\mu_t^1(dp)+(1-\lambda_t)\int F^{-1}_{-Y \mid \calF^n_t}(p)\mu_t^2(dp).   \label{eq:phinaltmapping}
\end{align}
Note that \eqref{eq:phiCoCnoLL} and \eqref{eq:phiCoCnoLL_n} correspond to $\mu_t^1(dp)=dp$ and $\lambda_t=(1+\eta_t)^{-1}$. 
\end{remark}

\begin{remark}\label{rem:ll}
Both \eqref{eq:phiCoCLL}, \eqref{eq:phiCoCLL_n} and \eqref{eq:phiPULL}, \eqref{eq:phiPULL_n} are special cases of 
\begin{align}
\phi_t(Y)&=\rho(-Y\mid \calF_t)-\gamma_t\E\Big[\big((\rho(-Y\mid\calF_t)-Y)^{+}\big)^{\beta_t}\mid\calF_t\Big]^{1/\beta_t}, \label{eq:phiPULLgen}\\
\phi^n_t(Y)&=\rho(-Y\mid \calF^n_t)-\gamma_t\E\Big[\big((\rho(-Y\mid\calF^n_t)-Y)^{+}\big)^{\beta_t}\mid\calF^n_t\Big]^{1/\beta_t}, \label{eq:phiPULLgen_n}
\end{align}
as seen by choosing $(\gamma_t,\beta_t)=((1+\eta_t)^{-1},1)$ in case of   \eqref{eq:phiCoCLL}, \eqref{eq:phiCoCLL_n}, and $\gamma_t=1$ in case of \eqref{eq:phiPULL}, \eqref{eq:phiPULL_n}.  
\end{remark}

Our next result shows convergence of multi-period values in the case of where the one period valuation mapping is defined as a sum of a conditional expectation and a constant times a conditional standard deviation. 
Let $c_t \in \R_+$ and
$\phi_t : L^2(\calF_{T})\to L^2(\calF_{t})$, $\phi^n_t : L^2(\calF^n_{T})\to L^2(\calF^n_{t})$ be given by
\begin{align}
\phi_t(Y)&=\E \big[Y \mid \calF_t \big] + c_t \var \big[Y \mid \calF_t \big]^{1/2}, \label{eq:phimappingII} \\
\phi^n_t(Y)&=\E \big[Y \mid \calF^n_t \big] + c_t \var \big[Y \mid \calF^n_t \big]^{1/2}. \label{eq:phinmappingII} 
\end{align}
$(V_t(X))_{t=0}^T$ and $(V^n_t(X^n))_{t=0}^T$ are defined backward recursively by \eqref{eq:VtVal} and \eqref{eq:VntVal}, and \eqref{eq:phitab} and \eqref{eq:Vtab} hold.  

\begin{theorem}\label{thm:CondExpVar}
Let $X,X^1,X^2,\dots$ be random vectors in $\R^T$ and let $Y,Y^1,Y^2,\dots$ be random vectors in $(\R^d)^T$, $d\in\N$. Suppose that $\calL(X^n,Y^n)\weakly \calL(X,Y)$ as $n\to\infty$, where $\calL(X,Y)$ is Gaussian, and such that $(\calL(\Vert X^n\Vert^2))_{n\in\N}$ is uniformly integrable. Suppose also that, for each $t$ and each convergent sequence $(x^n,y^n)_{\leq t}\to (x,y)_{\leq t}$ with $(x^n,y^n)_{\leq t}\in \supp(\calL((X^n,Y^n)_{\leq t}))$,     
\begin{align}
& \calL\Big((X^n,Y^n) \mid (X^n,Y^n)_{\leq t}=(x^n,y^n)_{\leq t}\Big) \weakly \calL\Big((X,Y) \mid (X,Y)_{\leq t}=(x,y)_{\leq t}\Big) \quad \text{as } n\to\infty, \label{eq:contconvII} \\
& \Big(\calL(\Vert X^n\Vert^2 \mid (X^n,Y^n)_{\leq t}=(x^n,y^n)_{\leq t})\Big)_{n\in\N} \quad \text{is uniformly integrable}. \label{eq:condxnuiII} 
\end{align}
Let $V_t(X)$ and $V^n_t(X^n)$ be given by \eqref{eq:VtVal} and \eqref{eq:VntVal} with $\phi_t$ and $\phi^n_t$ given by \eqref{eq:phimappingII} and \eqref{eq:phinmappingII} with $\calF_t=\sigma((X,Y)_{\leq t})$ and $\calF^n_t=\sigma((X^n,Y^n)_{\leq t})$. 
Then $\lim_{n\to\infty}V^n_0(X^n)=V_0(X)$, where 
\begin{align*}
V_0(X)=\E\bigg[\sum_{t=1}^T X_t\bigg]+\sum_{t=1}^T\phi_{t-1}(\varepsilon_t)\bigg(\Var\bigg(\sum_{u=t}^TX_u\mid (X,Y)_{\leq t-1}\bigg)
-\Var\bigg(\sum_{u=t}^TX_u\mid (X,Y)_{\leq t}\bigg)\bigg)^{1/2},
\end{align*}
where $\varepsilon_t$ is standard normally distributed and independent of $\calF_{t-1}$. 
\end{theorem}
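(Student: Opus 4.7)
The plan is to prove $V^n_0(X^n)\to V_0(X)$ by a backward induction on $t$ that runs in parallel over $n$, and then to identify $V_0(X)$ with the claimed explicit formula by a separate calculation exploiting Gaussianity. Since $V^n_t(X^n)$ is $\calF^n_t$-measurable, there exists a measurable function $v^n_t$ on $\supp(\calL((X^n,Y^n)_{\leq t}))$ with $V^n_t(X^n)=v^n_t((X^n,Y^n)_{\leq t})$; define $v_t$ analogously for the limit. The inductive claim at time $t$ has two parts: (i) $v^n_t((x^n,y^n)_{\leq t})\to v_t((x,y)_{\leq t})$ for every convergent sequence $(x^n,y^n)_{\leq t}\to(x,y)_{\leq t}$ with $(x^n,y^n)_{\leq t}\in\supp(\calL((X^n,Y^n)_{\leq t}))$; and (ii) along any such sequence, the conditional laws of $|v^n_t((X^n,Y^n)_{\leq t})|^2$ given $(X^n,Y^n)_{\leq t-1}=(x^n,y^n)_{\leq t-1}$ are uniformly integrable in $n$.

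The base case $t=T$ is trivial. For the induction step, assume (i)--(ii) at time $t+1$ and fix a convergent sequence $(x^n,y^n)_{\leq t}\to(x,y)_{\leq t}$ in the supports. By \eqref{eq:contconvII}, the regular conditional laws of $(X^n_{t+1},Y^n_{t+1})$ given $(X^n,Y^n)_{\leq t}=(x^n,y^n)_{\leq t}$ converge weakly to those of $(X_{t+1},Y_{t+1})$ given $(X,Y)_{\leq t}=(x,y)_{\leq t}$. Combining \eqref{eq:condxnuiII} with (ii) at time $t+1$ and a standard continuous-mapping argument with moving integrands that are uniformly $L^2$-bounded, the conditional mean and conditional second moment of $X^n_{t+1}+v^n_{t+1}\bigl((x^n,y^n)_{\leq t},\,\cdot\,\bigr)$ converge to those of $X_{t+1}+v_{t+1}\bigl((x,y)_{\leq t},\,\cdot\,\bigr)$. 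Continuity of the square root and cash additivity then yield (i) at time $t$, and a conditional Minkowski bound of $|v^n_t|$ in terms of the next step's conditional second moments propagates (ii). At $t=0$ the filtration is trivial, so (i) reduces to $V^n_0(X^n)\to V_0(X)$.

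To identify $V_0(X)$ with the claimed expression, a separate backward induction on \eqref{eq:VtVal} with $\phi_t$ given by \eqref{eq:phimappingII} is performed. Gaussianity of $(X,Y)$ makes every conditional variance appearing in the recursion deterministic; the law of total variance gives $\var(\sum_{u=t}^T X_u\mid\calF_{t-1})=\var(\sum_{u=t}^T X_u\mid\calF_t)+\var(\E[\sum_{u=t}^T X_u\mid\calF_t]\mid\calF_{t-1})$, so the new variance increment seen at time $t$ is exactly the quantity under the square root in the claim. Since $\phi_{t-1}(\varepsilon_t)=c_{t-1}$ for $\varepsilon_t$ standard normal and $\calF_{t-1}$-independent, telescoping the recursion produces the formula for $V_0(X)$.

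The main obstacle is the $L^2$ uniform integrability propagation in step (ii). The hypothesis \eqref{eq:condxnuiII} directly controls $\Vert X^n\Vert^2$ conditionally, but $v^n_{t+1}$ is built by iterating mappings of the form $\E[\,\cdot\mid\calF^n_s]+c_s\sqrt{\var(\cdot\mid\calF^n_s)}$, so its conditional second moments must be bounded uniformly in $n$ and along convergent sequences in the supports, in terms of conditional second moments of $X^n_{>t}$. Careful use of conditional Minkowski inequalities and the tower property delivers the required uniform control, and this bookkeeping is the core technical content of the induction.
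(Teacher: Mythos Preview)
Your approach is essentially that of the paper: backward induction carrying a continuous-convergence hypothesis together with a conditional $L^2$ uniform integrability hypothesis, followed by identification of $V_0(X)$ via Lemma~\ref{lem:gaussian_value} and the variance decomposition. The paper works with $\cvfun^n_t=\sum_{s\leq t}X^n_s+V^n_t(X^n)$ rather than your $v^n_t=V^n_t(X^n)$, but this is only a reparametrization.

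There is, however, a genuine gap in your formulation of (ii). You state uniform integrability of $|v^n_t|^2$ only conditionally on the \emph{immediately preceding} time $t-1$. This is too weak to close the induction: to propagate (ii) from $t+1$ to $t$ you bound $|v^n_t(z)|^2$ by a constant times $\E[|X^n_{t+1}+v^n_{t+1}(Z^n_{\leq t+1})|^2\mid Z^n_{\leq t}=z]$ and then need uniform integrability of this conditional expectation given $Z^n_{\leq t-1}=z^n_{\leq t-1}$. A tower-type argument (the paper's Lemma~\ref{lem:TowerUI}) reduces this to uniform integrability of $|v^n_{t+1}(Z^n_{\leq t+1})|^2$ conditionally on $Z^n_{\leq t-1}$, which is \emph{not} supplied by your (ii) at $t+1$ since that only conditions on $Z^n_{\leq t}$. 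The paper resolves this by stating UI$(t)$ for \emph{every} $s\in\{1,\dots,t-1\}$, not just $s=t-1$. Your last paragraph hints at an alternative fix---an iterated bound of $|v^n_t|^2$ directly in terms of $\E[\Vert X^n\Vert^2\mid Z^n_{\leq t}]$---which also works, but then (ii) should be replaced by that bound rather than the one-step statement you wrote.
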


The final theorem considers the Gaussian limit models appearing in Theorems \ref{thm:convergentvaluations} and \ref{thm:CondExpVar} and establishes a partial ordering of values $V_0(X)$ assigned to a given cashflow $X$ depending on a partial order between filtrations to which $X$ is adapted. One may guess that a larger filtration allows for more accurate predictions over time of future cashflows and thereby reduces the value $V_0(X)$. We show that this holds if the parameters of the one-step valuation mappings $\phi_t$ do not depend on $t$ and if the conditional variance of the aggregate cashflow $\sum_{s=1}^TX_s$ decays in a convex manner over time as more information becomes available.  

We consider filtrations $\filF=(\calF_t)_{t=0}^{T}$ with $\calF_0=\{\Omega,\emptyset\}$ and $\calF_t=\sigma((X,Y)_{\leq t})$, for $t\geq 1$, where $(X,Y)$ is jointly Gaussian. We assume that the one-step valuation mappings $\phi_t$ do not depend on $t$ such that $\phi_{t-1}(\varepsilon_t)=\phi_0(\varepsilon_1)$ in \eqref{eq:V0expression}. We write 
\begin{align*}
V_0(X,\filF)=\E\bigg[\sum_{t=1}^T X_t\bigg]+\phi_{0}(\varepsilon_1)\sum_{t=1}^T\bigg(&\Var\bigg(\sum_{u=t}^TX_u\mid \calF_{t-1}\bigg)
-\Var\bigg(\sum_{u=t}^TX_u\mid \calF_{t}\bigg)\bigg)^{1/2}.
\end{align*}
The following theorem says that for filtrations $\filF$ and $\filG$ of the above kind (generated by a process that is jointly Gaussian together with the cashflow process $X$), if $t\mapsto \Var(\sum_{s=1}^TX_s \mid \calF_t)$ is convex, then $V_0(X,\filF)\geq V_0(X,\filG)$ whenever $\calF_t\subseteq \calG_t$ for every $t$. The convexity assumption is necessary.  

\begin{theorem}\label{thm:V0monotonicity}
Let $(X,Y)_{t=1}^T$ and $(X,Z)_{t=1}^T$ be Gaussian processes. 
Let $\filF=(\calF_t)_{t=0}^{T}$ and $\filG=(\calG_t)_{t=0}^{T}$ be filtrations with $\calF_0=\calG_0=\{\Omega,\emptyset\}$ and $\calF_t=\sigma((X,Y)_{\leq t})$ and $\calG_t=\sigma((X,Z)_{\leq t})$, for $t\geq 1$. 
If $\calF_t\subseteq \calG_t$ for every $t$ and if $t\mapsto \Var(\sum_{s=1}^TX_s \mid \calF_t)$ is convex, then $V_0(X,\filF)\geq V_0(X,\filG)$. 
\end{theorem}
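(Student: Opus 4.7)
The plan is to reduce the theorem to a one-dimensional majorization problem for the decreasing sequences of conditional variances of the aggregate cashflow, and then to exploit Schur-concavity of the sum of square-roots.

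First, I would simplify both $V_0(X,\filF)$ and $V_0(X,\filG)$. Writing $S = \sum_{u=1}^T X_u$, $v^{\filF}_t = \Var(S\mid \calF_t)$, $v^{\filG}_t = \Var(S\mid \calG_t)$, and using that $\sum_{s=1}^{t-1} X_s$ is $\calF_{t-1}$- and $\calF_t$-measurable, we get $\Var(\sum_{u=t}^T X_u\mid \calF_{t-1}) = v^{\filF}_{t-1}$ and $\Var(\sum_{u=t}^T X_u\mid \calF_t)= v^{\filF}_t$, and likewise for $\filG$. With $d^{\filF}_t := v^{\filF}_{t-1}-v^{\filF}_t\geq 0$ and $d^{\filG}_t := v^{\filG}_{t-1}-v^{\filG}_t\geq 0$, this gives
\begin{align*}
V_0(X,\filF) = \E[S] + \phi_0(\varepsilon_1)\sum_{t=1}^T \sqrt{d^{\filF}_t}, \qquad V_0(X,\filG) = \E[S] + \phi_0(\varepsilon_1)\sum_{t=1}^T \sqrt{d^{\filG}_t},
\end{align*}
and I would observe that $\phi_0(\varepsilon_1)\geq 0$ for the standard choices of $\rho$ (e.g.\ $\VaR_u$ or $\AVaR_u$ with $u<1/2$), so it suffices to compare $\sum_t \sqrt{d^{\filF}_t}$ and $\sum_t \sqrt{d^{\filG}_t}$.

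Second, I would record the monotonicity $v^{\filF}_t \geq v^{\filG}_t$, which follows from $\calF_t\subseteq \calG_t$ because the Gaussian assumption makes the conditional variances nonrandom and the variance decomposition
$\Var(S\mid\calF_t) = \E[\Var(S\mid\calG_t)\mid\calF_t] + \Var(\E[S\mid\calG_t]\mid\calF_t)$ forces $v^{\filF}_t\geq v^{\filG}_t$. Since $v^{\filF}_0 = v^{\filG}_0 = \Var(S)$ and $v^{\filF}_T = v^{\filG}_T = 0$, partial sums satisfy $\sum_{s=1}^k d^{\filF}_s = v^{\filF}_0 - v^{\filF}_k \leq v^{\filG}_0 - v^{\filG}_k = \sum_{s=1}^k d^{\filG}_s$ for every $k$, with equality at $k=T$.

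Third, I would use the convexity assumption: $t\mapsto v^{\filF}_t$ convex is exactly $v^{\filF}_{t+1} - 2v^{\filF}_t + v^{\filF}_{t-1}\geq 0$, equivalently $d^{\filF}_{t+1}\leq d^{\filF}_t$, so $(d^{\filF}_t)_{t=1}^T$ is already sorted in decreasing order. Let $(d^{\filG,*}_t)$ denote $(d^{\filG}_t)$ rearranged in decreasing order. Then for every $k$,
\begin{align*}
\sum_{s=1}^k d^{\filF,*}_s = \sum_{s=1}^k d^{\filF}_s \leq \sum_{s=1}^k d^{\filG}_s \leq \sum_{s=1}^k d^{\filG,*}_s,
\end{align*}
with equality at $k=T$, which is precisely the statement that $d^{\filG,*}$ majorizes $d^{\filF}$. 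Applying Schur-concavity of the symmetric, concave map $x\mapsto \sum_i \sqrt{x_i}$ on $\R_+^T$ yields $\sum_t \sqrt{d^{\filF}_t}\geq \sum_t \sqrt{d^{\filG,*}_t} = \sum_t \sqrt{d^{\filG}_t}$, and multiplying by the nonnegative constant $\phi_0(\varepsilon_1)$ gives $V_0(X,\filF)\geq V_0(X,\filG)$.

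The main obstacle is conceptual rather than computational: recognizing that convexity of $t\mapsto v^{\filF}_t$ is exactly the hypothesis that places $d^{\filF}$ on the ``extremal'' side of the majorization order among sequences with fixed total sum $\Var(S)$ and prescribed upper bound on partial sums, so that Schur-concavity of $\sum\sqrt{\cdot}$ can be invoked in the right direction. The earlier remark in the paper noting that the assumption is necessary is easily illustrated by a three-period example in which $d^{\filF}$ is non-monotone, where the inequality reverses.
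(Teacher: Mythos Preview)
Your argument is correct and shares the same reduction as the paper: both normalize the variance increments $d^{\filF}_t$ and $d^{\filG}_t$, observe that $\calF_t\subseteq\calG_t$ forces $\sum_{s\le k}d^{\filF}_s\le\sum_{s\le k}d^{\filG}_s$ with equality at $k=T$, and use convexity of $t\mapsto v^{\filF}_t$ to ensure $(d^{\filF}_t)$ is nonincreasing. The divergence is in how the key inequality $\sum_t\sqrt{d^{\filF}_t}\ge\sum_t\sqrt{d^{\filG}_t}$ is established. The paper proves a bespoke lemma (its Lemma~6) showing directly, via a convex-optimization argument with an explicit spanning set for the feasible directions, that a nonincreasing $c$ maximizes $\sum_t d_t^{1/2}$ over $\{d:\sum d_t=1,\ \sum_{s\le k}d_s\ge\sum_{s\le k}c_s\}$. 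You instead recognize the constraint structure as majorization: passing to the decreasing rearrangement $d^{\filG,*}$ and using that $(d^{\filF}_t)$ is already decreasing, you get $d^{\filF}\prec d^{\filG}$ and invoke Schur-concavity of $x\mapsto\sum_i\sqrt{x_i}$. Your route is shorter and situates the result within a classical framework, at the cost of importing the Hardy--Littlewood--P\'olya/Schur machinery; the paper's lemma is self-contained and tailored exactly to the constraint set that arises, making the role of the convexity hypothesis (i.e.\ monotonicity of the increments) completely transparent in the gradient computation. Both approaches leave the nonnegativity of $\phi_0(\varepsilon_1)$ implicit; you flag it explicitly, which is a small improvement in clarity.
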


\begin{remark}
We emphasize that the Gaussian assumption in Theorem \ref{thm:V0monotonicity} means that the terms  
\begin{align*}
c_t=\Var\bigg(\sum_{s=1}^TX_s \mid \calF_{t-1}\bigg)-\Var\bigg(\sum_{s=1}^TX_s \mid \calF_{t}\bigg) 
\end{align*}
form a nonrandom sequence $(c_t)_{t=1}^{T}$ and that the convexity property is equivalent to 
$c_1\geq c_2 \geq \dots \geq c_{T}$. 
\end{remark}

\section{Proofs and auxiliary results}\label{sec:proofs}

We use the notation $Z^n_{\leq t}=(X^n,Y^n)_{\leq t}$, $Z_{\leq t}=(X,Y)_{\leq t}$ and $z^n_{\leq t}=(x^n,y^n)_{\leq t}$, $z_{\leq t}=(x,y)_{\leq t}$.  

\begin{proof}[Proof of Theorem \ref{thm:convergentvaluations} assuming \eqref{eq:phialtmapping} and \eqref{eq:phinaltmapping}]
As explained in Remark \ref{rem:noll}, \eqref{eq:phiCoCnoLL} and \eqref{eq:phiCoCnoLL_n} are special cases of \eqref{eq:phialtmapping} and \eqref{eq:phinaltmapping}. 

For $t=1,\dots,T$, consider mappings $\cvfun^n_t:\supp(\calL(Z^n_{\leq t}))\to \R$ given by $\cvfun^n_T(z^n_{\leq T})=\sum_{s=1}^Tx^n_s$ and 
\begin{align}\label{eq:cvfun_ta}
\cvfun^n_t(z^n_{\leq t}) 
=\lambda_t\int F^{-1}_{\cvfun^n_{t+1}(Z^n_{\leq t+1}) \mid Z^n_{\leq t} = z^n_{\leq t}}(p)\mu_t^1(dp)+(1-\lambda_t)\int F^{-1}_{-\cvfun^n_{t+1}(Z^n_{\leq t+1}) \mid Z^n_{\leq t} = z^n_{\leq t}}(p)\mu_t^2(dp),
\end{align}
and let 
\begin{align}\label{eq:cvfun_0a}
\cvfun^n_0 =  \lambda_0\int F^{-1}_{\cvfun^n_{1}(Z^n_{1})}(p)\mu_0^1(dp)+(1-\lambda_0)\int F^{-1}_{-\cvfun^n_{1}(Z^n_{1})}(p)\mu_0^2(dp),
\end{align}
where $\mu_t^1, \mu_t^2 \in \calP([0,1])'$. We define $\cvfun_t:\R^t\to \R$ and $\cvfun_0$ analogously without the superscript $n$. 
Note that 
\begin{align*}
\cvfun^n_t(Z^n_{\leq t})&=\sum_{s=1}^{t}X^n_s+V^n_t(X^n)=\phi^n_t(\cvfun^n_{t+1}(Z^n_{\leq t+1})), \; t\geq 1, 
\quad \cvfun^n_0=V^n_0(X)=\phi^n_0(\cvfun^n_1(Z^n_{\leq 1})), \\
\cvfun_t(Z_{\leq t})&=\sum_{s=1}^{t}X_s+V_t(X)=\phi_t(\cvfun_{t+1}(Z_{\leq t+1})), \; t\geq 1, 
\quad \cvfun_0=V_0(X)=\phi_0(\cvfun_1(Z_{\leq 1})). 
\end{align*}
Therefore, the proof is complete once we show the convergence $\lim_{n\to\infty}\cvfun^n_0=\cvfun_0$. 

The argument of the proof is backwards induction. We will show that for $t = T, \dots, 1$, 
\begin{itemize}
\item[{\bf UI($t$):}] $(\calL(\cvfun^n_t(Z^n_{\leq t}) \mid Z^n_{\leq s} = z^n_{\leq s}))_n$ is uniformly integrable for each $s = 1, \dots, t-1$ and $z^n_{\leq s} \to z_{\leq s}$,
\item[{\bf CC($t$):}] $(\cvfun^n_t)_n$ is continuously convergent, i.e. $\cvfun^n_t(z^n_{\leq t}) \to \cvfun_t(z_{\leq t})$, whenever $z^n_{\leq t} \to z_{\leq t}$.
\end{itemize}

\noindent
{\bf Induction base:} {\bf UI($T$)} and {\bf CC($T$)} hold.
 
For $s \in \{1, \dots, T-1\}$ and $z^n_{\leq s} \to z_{\leq s}$, {\bf UI($T$)} follows from \eqref{eq:condxnui} since norms on Euclidean spaces are equivalent and $|\cvfun^n_T(z^n_{\leq T})| = |\sum_{s=1}^T x^n_s| \leq \sum_{s=1}^T |x^n_s|$.   
Moreover, 
$$
\cvfun^n_T(z^n_{\leq T}) = \sum_{s=1}^T x^n_s \to \sum_{s=1}^T x_s = \cvfun_T(z_{\leq T})
$$
whenever $z^n_{\leq T} \to z_{\leq T}$, yielding {\bf CC($T$)}.

\noindent
{\bf Induction step:} {\bf UI($t+1$)}, {\bf CC($t+1$)} together imply {\bf UI($t$)}, {\bf CC($t$)}.

Fix $s\in \{1,\dots,t-1\}$ and sequence $(z^n_{\leq s})_n$ with $z^n_{\leq s}\to z_{\leq s}$. 
Let $\kappa^n_{s+1,t+1}(z^n_{\leq s},\cdot)$ and $\kappa_{s+1,t+1}(z_{\leq s},\cdot)$ be regular versions of the conditional distributions 
$$
\P((Z^n_{s+1},\dots,Z^n_{t+1})^{\trans} \in \cdot \mid Z^n_{\leq s} = z^n_{\leq s})
\quad\text{and}\quad
\P((Z_{s+1},\dots,Z_{t+1})^{\trans} \in \cdot \mid Z_{\leq s} = z_{\leq s}).
$$  
By assumption \eqref{eq:contconv}, $\kappa^n_{s+1,t+1}(z^n_{\leq s},\cdot)\weakly \kappa_{s+1,t+1}(z_{\leq s},\cdot)$. 
Since $(\cvfun^n_{t+1}(z^n_{\leq s}, \cdot ))$ is continuously convergent by the induction assumption {\bf CC($t+1$)}, by the generalized continuous mapping theorem (\cite[Theorem 4.27, Exercise 27]{Kallenberg-02}), 
$$
\kappa^n_{s+1,t+1}(z^n_{\leq s}, \cdot) \circ (\cvfun^n_{t+1}(z^n_{\leq s}, \cdot))^{-1} \weakly \kappa_{s+1,t+1}(z_{\leq s}, \cdot) \circ (\cvfun_{t+1}(z_{\leq s}, \cdot))^{-1}, 
$$
i.e.
\begin{align}\label{conv:Weak123}
\calL(\cvfun^n_{t+1}(Z_{\leq t+1}^n) \mid Z^n_{\leq s} = z^n_{\leq s}) \weakly \calL(\cvfun_{t+1}(Z_{\leq t+1}) \mid Z_{\leq s} = z_{\leq s}).
\end{align}
Lemma \ref{lem:gaussian_value} shows that the limiting distribution in \eqref{conv:Weak123} is Gaussian. 
By Lemma \ref{lem:BoundRec}, there exists a constant $c > 0$, such that
$$
|\cvfun^n_t(z)| \leq c \E[|\cvfun^n_{t+1}(Z^n_{\leq t+1})| \mid Z^n_{\leq t} = z].
$$
Hence, it is sufficient to show uniform integrability of
\begin{align}\label{eq:UI23}
\Big(\calL\big(\E[|\cvfun^n_{t+1}(Z^{n}_{\leq t+1})| \mid Z^n_{\leq t}] \mid Z^n_{\leq s}=z^n_{\leq s}\big)\Big)_n.
\end{align}
Setting $\widetilde{W}^{n}=\E[|\cvfun^{n}_{t+1}(Z^n_{\leq t+1})| \mid Z^n_{\leq t}]$ and $W^{n}=|\cvfun^{n}_{t+1}(Z^n_{\leq t+1})|$, this follows from Lemma \ref{lem:TowerUI}, using {\bf UI($t+1$)} and {\bf CC($t+1$)}. Hence, {\bf UI($t$)} holds.

We proceed by showing {\bf CC($t$)}, i.e.~the convergence of 
\begin{align}
\cvfun^n_t(z^n_{\leq t}) &=
\lambda_t\int F^{-1}_{\cvfun^n_{t+1}(Z^n_{\leq t+1}) \mid Z^n_{\leq t} = z_{\leq t}^n }(p) \mu_t^1(dp) \label{eq:RecConv2a} \\
& \quad + (1-\lambda_t)\int F^{-1}_{-\cvfun^n_{t+1}(Z^n_{\leq t+1}) \mid Z^n_{\leq t} = z^n_{\leq t} }(p) \mu_t^2(dp). \label{eq:RecConv2b}
\end{align}
We will prove the convergence assuming that $\mu^1_t$ has a bounded density and that $\supp(\mu^2_t)\subset [a,b]\subset (0,1)$. Other possibilities for $\mu^1_t,\mu^2_t\in \calP([0,1])'$ are handled by the same arguments as those shown below. 

We start by the integral in \eqref{eq:RecConv2a} and will show convergence by an application of Pratt's Lemma \cite{Pratt-60}. 
Let $w : (0, 1) \to \R$, bounded by some $c' > 0$, be the density of $\mu_t^1$, i.e. $\mu_t^1(dp) = w(p)dp$. Let
$$
f_n(p) =  F^{-1}_{\cvfun^n_{t+1}(Z^n_{\leq t+1}) \mid Z^n_{\leq t}= z^n_{\leq t}}(p) w(p).
$$
From \eqref{conv:Weak123} follows convergence 
\begin{align*}
f_n(p) \to F^{-1}_{\cvfun_{t+1}(Z_{\leq t+1}) \mid Z_{\leq t}= z_{\leq t}}(p)w(p) 
\end{align*}
for almost every $p \in (0,1)$. 
By combining \eqref{conv:Weak123} and the continuous mapping theorem applied to the absolute value function follows convergence of upper and lower bounds $l_n(p) \leq f_n(p) \leq u_n(p)$ for almost every $p\in (0,1)$:  
\begin{align*}
l_n(p) &= c'F^{-1}_{-|\cvfun^n_{t+1}(Z^n_{\leq t+1})| \mid Z^n_{\leq t} = z^n_{\leq t} }(p) \to  c'F^{-1}_{-|\cvfun_{t+1}(Z_{\leq t+1})| \mid Z_{\leq t} = z_{\leq t} }(p) = l(p) \\ 
u_n(p) &= c'F^{-1}_{|\cvfun^n_{t+1}(Z^n_{\leq t+1})| \mid Z^n_{\leq t} = z^n_{\leq t} }(p) \to  c'F^{-1}_{|\cvfun_{t+1}(Z_{\leq t+1})| \mid Z_{\leq t} = z_{\leq t} }(p) = u(p).
\end{align*} 
The induction assumption {\bf UI($t+1$)} together with   
\begin{align*}
\calL(|\cvfun^n_{t+1}(Z^n_{\leq t+1})| \mid Z^n_{\leq t} = z^n_{\leq t})
\weakly \calL(|\cvfun_{t+1}(Z_{\leq t+1})| \mid Z_{\leq t} = z_{\leq t}), 
\end{align*}
allow us to conclude from \cite[Theorem 4.11]{Kallenberg-02} that 
\begin{align*}
\int l_n(p) dp = &-c'\E[|\cvfun^n_{t+1}(Z_{\leq t+1}^n)| \mid Z_{\leq t}^n = z_{\leq t}^n] \\ \to 
&-c'\E[|\cvfun_{t+1}(Z_{\leq t+1})| \mid Z_{\leq t} = z_{\leq t}] = \int l(p) dp 
\end{align*}
and 
\begin{align*}
\int u_n(p) dp = \; &c'\E[|\cvfun^n_{t+1}(Z_{\leq t+1}^n)| \mid Z_{\leq t}^n = z_{\leq t}^n] \\ \to \;
&c'\E[|\cvfun_{t+1}(Z_{\leq t+1})| \mid Z_{\leq t} = z_{\leq t}] = \int u(p) dp.
\end{align*}
Hence, by Pratt's Lemma \cite[Theorem 1]{Pratt-60} we have convergence of the integral in \eqref{eq:RecConv2a}, i.e.
\begin{align*}
\int F^{-1}_{\cvfun^n_{t+1}(Z^n_{\leq t+1}) \mid Z^n_{\leq t} = z_{\leq t}^n }(p) \mu_1(dp) 
&= \int f_n(p)dp \to \int f(p) dp \\ 
&= \int F^{-1}_{\cvfun_{t+1}(Z_{\leq t+1}) \mid Z_{\leq t} = z_{\leq t} }(p) \mu_1(dp).
\end{align*}
We now consider the integral in \eqref{eq:RecConv2b}. By Lemma \ref{lem:gaussian_value}, the mapping $p \mapsto F^{-1}_{-\cvfun_{t+1}(Z_{\leq t+1}) \mid Z_{\leq t} = z_{\leq t} }(p)$ is continuous for all $p \in (0,1)$. Therefore, from \eqref{conv:Weak123} follows pointwise convergence 
$$
F^{-1}_{-\cvfun^n_{t+1}(Z^n_{\leq t+1}) \mid Z^n_{\leq t} = z_{\leq t}^n }(p) \to F^{-1}_{-\cvfun_{t+1}(Z_{\leq t+1}) \mid Z_{\leq t} = z_{\leq t} }(p)
$$
for all $p \in (0,1)$. In particular, the convergence holds at the points $a, b \in (0,1)$. Therefore, using the fact that $p \mapsto F^{-1}_{-\cvfun^n_{t+1}(Z^n_{\leq t+1}) \mid Z^n_{\leq t} = z^n_{\leq t} }(p)$ and $p \mapsto F^{-1}_{-\cvfun_{t+1}(Z_{\leq t+1}) \mid Z_{\leq t} = z_{\leq t} }(p)$ are increasing, we can find $d > 0$, such that 
$$
c'' = \max \bigg(\Big|F^{-1}_{-\cvfun_{t+1}(Z_{\leq t+1}) \mid Z_{\leq t} = z_{\leq t} }(a)\Big|, \Big|F^{-1}_{-\cvfun_{t+1}(Z_{\leq t+1}) \mid Z_{\leq t} = z_{\leq t} }(b)\Big| \bigg) + d
$$
uniformly bounds the integrand, i.e.
$$
\Big|F^{-1}_{-\cvfun^n_{t+1}(Z^n_{\leq t+1}) \mid Z^n_{\leq t} = z^n_{\leq t} }(p)\Big| \leq c''
$$
for all $p \in [a, b]$ and for all $n \in \N$. Hence, the bounded convergence theorem yields 
\begin{align*}
\int F^{-1}_{-\cvfun^n_{t+1}(Z^n_{\leq t+1}) \mid Z^n_{\leq t} = z_{\leq t}^n }(p) \mu^2_t(dp) 
\to \int F^{-1}_{-\cvfun_{t+1}(Z_{\leq t+1}) \mid Z_{\leq t} = z_{\leq t} }(p) \mu^2_t(dp).
\end{align*}
Hence, we have shown {\bf CC($t$)} and the proof of the induction step is complete. 

It remains to show convergence of $\cvfun^n_0$. By the continuous convergence of $(\cvfun^n_1)$ and the generalized continuous mapping theorem (\cite[Theorem 4.27]{Kallenberg-02}), it follows that $\calL(\cvfun^n_1(Z_1^n)) \weakly \calL(\cvfun_1(Z_1))$, where the limit is Gaussian (cf. Lemma \ref{lem:gaussian_value}). Therefore the convergence of 
\begin{align*}
\cvfun^n_0=
\lambda_0\int F^{-1}_{\cvfun^n_{1}(Z^n_{1})}(p) \mu_0^1(dp)
 + (1-\lambda_0)\int F^{-1}_{-\cvfun^n_{1}(Z^n_{1}) }(p) \mu_0^2(dp)
\end{align*}
follows from arguments completely analogous to those verifying the induction step. Lemma \ref{lem:gaussian_value} together with the variance decomposition 
\begin{align*}
\Var\bigg(\E\bigg[\sum_{v=u}^{T}X_v\mid Z_{\leq u}\bigg]\mid Z_{\leq u-1}\bigg)
&=\Var\bigg(\sum_{v=u}^{T}X_v\mid Z_{\leq u-1}\bigg)\\
&\quad-\E\bigg[\Var\bigg(\sum_{v=u}^{T}X_v\mid Z_{\leq u}\bigg)\mid Z_{\leq u-1}\bigg]\\
&=\Var\bigg(\sum_{v=u}^{T}X_v\mid Z_{\leq u-1}\bigg)-\Var\bigg(\sum_{v=u}^{T}X_v\mid Z_{\leq u}\bigg)
\end{align*}
completes the proof.
\end{proof}

\begin{proof}[Proof of Theorem \ref{thm:convergentvaluations} assuming \eqref{eq:phiPULLgen} and \eqref{eq:phiPULLgen_n}] 
 The proofs is similar to the above proof of Theorem \ref{thm:convergentvaluations} assuming \eqref{eq:phialtmapping} and \eqref{eq:phinaltmapping}. 
 
Fix $\mu_0, \dots, \mu_{T-1} \in \calP([0,1])'$.  
Let $\crfun^n_T(z^n_{\leq T})=\sum_{s=1}^T x^n_s$ and $\cvfun^n_T(z^n_{\leq T})=\sum_{s=1}^T x^n_s$.  
For $t=1,\dots,T-1$, define mappings $\crfun^n_t,\cvfun^n_t:\supp(\calL(Z^n_{\leq t}))\to \R$ by 
\begin{align}
\crfun^n_{t}(z^n_{\leq t})&=\int F^{-1}_{-\cvfun^n_{t+1}(Z^n_{\leq t+1})\mid Z^n_{\leq t}=z^n_{\leq t}}(p)\mu_t(dp),  \label{eq:ConvRisk3}  \\
\cvfun^n_{t}(z^n_{\leq t})&=\crfun^n_{t}(z^n_{\leq t})-\gamma_t\E\Big[\Big(\big(\crfun^n_{t}(z^n_{\leq t})-\cvfun^n_{t+1}(Z^n_{\leq t+1})\big)^+\Big)^{\beta_t}\mid Z^n_{\leq t}=z^n_{\leq t}\Big]^{1/\beta_t}. \label{eq:ConvUtility3}
\end{align}
Let 
\begin{align}\label{eq:ConvRU3}
\crfun^n_{0}=\int F^{-1}_{-\cvfun^n_{1}(Z^n_{1})}(p)\mu_0(dp),\quad 
\cvfun^n_{0}=\crfun^n_{0}-\gamma_0\E\Big[\Big(\big(\crfun^n_{0}-\cvfun^n_{1}(Z^n_{1})\big)^+\Big)^{\beta_0}\Big]^{1/\beta_0}.
\end{align}
Define $\crfun_t,\cvfun_t:\R^t\to\R$ and $\crfun_0,\cvfun_0$ analogously without the superscript $n$. 
The proof is complete once we show the convergence $\lim_{n\to\infty}\cvfun^n_0=\cvfun_0$. 

The argument of the proof is backwards induction. We will show that for $t = T, \dots, 1$, 
\begin{itemize}
\item[{\bf UI($t$):}] $(\calL(\cvfun^n_t(Z^n_{\leq t}) \mid Z^n_{\leq s} = x^n_{\leq s}))_n$ is uniformly integrable for each $s = 1, \dots, t-1$ and $z^n_{\leq s} \to z_{\leq s}$,
\item[{\bf CC($t$):}] $(\cvfun^n_t)_n$ is continuously convergent, i.e. $\cvfun^n_t(z^n_{\leq t}) \to \cvfun_t(z_{\leq t})$, whenever $z^n_{\leq t} \to z_{\leq t}$.
\end{itemize}

\noindent
{\bf Induction base:} {\bf UI($T$)} and {\bf CC($T$)} hold. The argument verifying the induction base is identical to that in the proofs of Theorem \ref{thm:convergentvaluations} assuming \eqref{eq:phialtmapping} and \eqref{eq:phinaltmapping}. 

\noindent
{\bf Induction step:} {\bf UI($t+1$)}, {\bf CC($t+1$)} together imply {\bf UI($t$)}, {\bf CC($t$)}.
The argument verifying that 
\begin{align}\label{conv:Weak123456}
\calL(\cvfun^n_{t+1}(Z^n_{\leq t+1}) \mid Z^n_{\leq s} = z^n_{\leq s}) \weakly \calL(\cvfun_{t+1}(Z_{\leq t+1}) \mid Z_{\leq s} = z_{\leq s}).
\end{align}
holds is identical to that in the proofs of Theorem \ref{thm:convergentvaluations} assuming \eqref{eq:phialtmapping} and \eqref{eq:phinaltmapping}. 
By Lemma \ref{lem:BoundUIUtility}, there exists a constant $c > 0$, such that
$$
|\cvfun^n_{t}(z)| \leq c \E[|\cvfun^n_{t+1}(Z^n_{\leq t+1})| \mid Z^n_{\leq t} = z].
$$
Hence, to verify {\bf UI($t$)} it is sufficient to show uniform integrability of 
$$
\Big(\calL\Big(\E[|\cvfun^n_{t+1}(Z^n_{\leq t+1})| \mid Z^n_{\leq t}] \mid Z^n_{\leq s} = z^n_{\leq s} \Big)\Big)_{n\in\N}
$$
and after setting $W^n = |\cvfun^n_{t+1}(Z^n_{\leq t+1})|$, $\widetilde{W}^n = \E[|\cvfun^n_{t+1}(Z_{\leq t+1}^n)| \mid Z^n_{\leq t}]$,
 {\bf UI($t$)} follows from {\bf UI($t+1$)} and \eqref{conv:Weak123456} by applying Lemma \ref{lem:TowerUI}.

We proceed by showing {\bf CC($t$)}, i.e.~the convergence of \eqref{eq:ConvUtility3}. 
We first consider \eqref{eq:ConvRisk3}. 
If $\mu_t$ admits a bounded density, convergence $\rho_t^n(z^n_{\leq t}) \to \rho_t(z_{\leq t})$ can be shown by Pratt's Lemma, together with \eqref{conv:Weak123456} and {\bf UI($t+1$)} (cf. Proof of Theorem \ref{thm:convergentvaluations} assuming \eqref{eq:phialtmapping} and \eqref{eq:phinaltmapping}). If $\mu_t$ satisfies $\supp(\mu_t) \subset [a, b]$ with $0<a<b<1$, we first note that we have pointwise convergence $F^{-1}_{-\cvfun^n_{t+1}(Z^n_{\leq t})\mid Z^n_{\leq t} = z^n_{\leq t}}(p) \to F^{-1}_{-\cvfun_{t+1}(Z_{\leq t})\mid Z_{\leq t} = z_{\leq t} }(p) $ for all $p \in (0,1)$ since the limiting distribution in \eqref{conv:Weak123456} is Gaussian (cf. Lemma \ref{lem:gaussian_value}). There exists $c > 0$ such that
\begin{align*}
\sup_n\sup_{p \in [a,b]}|F^{-1}_{-\cvfun^n_{t+1}(Z^n_{\leq t+1}) \mid Z^n_{\leq t} = z^n_{\leq t}}(p)| 
 \leq \sup_n c \E[|\cvfun^n_{t+1}(Z^n_{\leq t+1}) | \mid Z^n_{\leq t} = z^n_{\leq t}]
\end{align*}
and the right-hand side is finite. 
The inequality is due to properties of quantile functions (cf. Proof of Lemma \ref{lem:BoundRec}). The finiteness is because of the convergence of the conditional expectation which is due to \eqref{conv:Weak123456} and {\bf UI($t+1$)}. Hence, dominated convergence yields $\crfun^n_t(z^n_{\leq t}) \to \crfun_t(z_{\leq t})$.

We now show convergence of the conditional expectation in \eqref{eq:ConvUtility3} which can be written in terms of 
\begin{align*}
\E \Big[h_n\big(\cvfun^n_{t+1}(Z^n_{\leq t+1})\big) \mid Z^n_{\leq t} = z^n_{\leq t}\Big],
\end{align*}
where $h_n(w) = \big(\big(\crfun^n_t(z^n_{\leq t}) - w\big)^{+}\big)^{\beta_t}$ converges continuously towards $h(w) = \big(\big(\crfun_t(z_{\leq t}) - w\big)^{+}\big)^{\beta_t}$. Moreover, since 
$$
|h_n(w)| \leq \bigg(\sup_n(\crfun^n_t(z^n_{\leq t})) + |w|\bigg)^{\beta_t}, \quad \sup_n(\crfun^n_t(z^n_{\leq t})) < \infty,
$$
and $\beta_t\in (0,1]$, uniform integrability of $\calL\big(h_n\big(\cvfun^n_{t+1}(Z^n_{\leq t+1})\big) \mid Z^n_{\leq t} = z^n_{\leq t}\big)$ follows from {\bf UI($t+1$)}. This means convergence 
\begin{align*}
\E\Big[h_n\big(\cvfun^n_{t+1}(Z^n_{\leq t+1})\big) \mid Z^n_{\leq t} = z^n_{\leq t}\Big]
\to \E\Big[h\big(\cvfun_{t+1}(Z_{\leq t+1})\big) \mid Z_{\leq t} = z_{\leq t}\Big]
\end{align*}
which implies {\bf CC($t$)} and the proof of the induction step is complete. 

Finally, the proof of the convergence $\lim_{n\to\infty}\cvfun^n_0=\cvfun_0$ follows by arguments analogous to those in the proof of Theorem \ref{thm:convergentvaluations} assuming \eqref{eq:phialtmapping} and \eqref{eq:phinaltmapping}. 
\end{proof}

\begin{lemma}\label{lem:quant_int_bound}
Let $W$ be a random variable with quantile function $F^{-1}(p)=\min\{x\in\R:F_{W}(x)\geq p\}$. If $\mu\in\calP([0,1])'$, then there exists $c\in (0,\infty)$ such that 
\begin{align*}
\Big |\int F^{-1}_{W}(p)\mu(dp) \Big | \leq c\E[|W|].
\end{align*}
\end{lemma}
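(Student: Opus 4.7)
The plan is to split the argument according to the two cases defining $\calP([0,1])'$ and handle each by elementary properties of the quantile function, so that the final constant $c$ is explicit in each case.

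In the bounded-density case, where $\mu(dp)=w(p)\,dp$ with $\sup_p w(p)\leq c'$, I would apply the triangle inequality to pull the absolute value inside the integral, bound $w$ uniformly by $c'$, and then invoke the identity $\int_0^1 |F^{-1}_W(p)|\,dp = \E[|W|]$, which follows at once from the fact that $F^{-1}_W(U)\eqdis W$ when $U$ is uniform on $(0,1)$. This immediately yields the bound with $c=c'$.

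In the compact-support case, where $\supp(\mu)\subset [a,b]$ with $0<a<b<1$, I would establish the pointwise Markov-type bound
$$
|F^{-1}_W(p)| \leq \frac{\E[|W|]}{\min(p,1-p)}, \quad p\in(0,1),
$$
by a case split on the sign of $y:=F^{-1}_W(p)$. When $y>0$, minimality of $y$ in the definition of the quantile forces $\P(W>y-\eps)>1-p$ for every $\eps\in(0,y)$, so $\E[|W|]\geq (y-\eps)(1-p)$, and letting $\eps\downarrow 0$ yields $y\leq \E[|W|]/(1-p)$. When $y<0$, right-continuity of $F_W$ gives $\P(W\leq y)\geq p$, whence $|y|\,p\leq \E[|W|]$. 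Since $\min(p,1-p)\geq \min(a,1-b)>0$ on $[a,b]$, integrating against the probability measure $\mu$ then gives the claim with $c=1/\min(a,1-b)$.

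The only point requiring care is the quantile bound itself: the sign split and the minimality built into the definition $F^{-1}_W(p)=\min\{x:F_W(x)\geq p\}$ must be combined correctly to extract the factor $1/\min(p,1-p)$. Otherwise the argument is just a combination of Markov's inequality with the quantile-to-expectation identity, and depending on which case of $\calP([0,1])'$ we are in, the constant $c$ is chosen as $c'$ or $1/\min(a,1-b)$.
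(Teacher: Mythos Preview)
Your proposal is correct and follows essentially the same approach as the paper: both split into the bounded-density and compact-support cases and use Markov-type quantile bounds, arriving at the identical constants $c=c'$ and $c=\max(1/a,1/(1-b))=1/\min(a,1-b)$. The only cosmetic difference is that the paper first sandwiches $F^{-1}_W$ between $F^{-1}_{-|W|}$ and $F^{-1}_{|W|}$ and then bounds those monotone quantile functions, whereas you bound $|F^{-1}_W(p)|$ directly via a sign split; both routes are equally valid.
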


\begin{proof}
We have the general bounds $l,u$ given by 
\begin{align*}
l=\int F^{-1}_{-|W|}(p)\mu(dp) \leq \int F^{-1}_{W}(p)\mu(dp) \leq \int F^{-1}_{|W|}(p)\mu(dp)=u.
\end{align*}
Moreover, 
\begin{align*}
F^{-1}_{-|W|}(p)=-F^{-1}_{|W|}((1-p)+)=-\lim_{v\downarrow 1-p}F^{-1}_{|W|}(v).
\end{align*}
For any $p\in (0,1)$, 
\begin{align*}
pF^{-1}_{|W|}(1-p)&\leq \int_{1-p}^{1}F^{-1}_{|W|}(v)dv\leq \E[|W|], \\
pF^{-1}_{-|W|}(p)&\geq \int_{0}^{p}F^{-1}_{-|W|}(v)dv\geq -\E[|W|].
\end{align*}
Consequently, if $\supp(\mu)\subset [a,b]\subset (0,1)$, then 
\begin{align*}
l\geq -\frac{1}{a}\E[|W|], \quad u\leq \frac{1}{1-b}\E[|W|], \quad \Big | \int F^{-1}_{W}(p)\mu(dp)\Big | \leq \max\Big(\frac{1}{a},\frac{1}{1-b}\Big)\E[|W|].
\end{align*}
If $\mu$ has a density bounded by $c'\in (0,\infty)$, then 
\begin{align*}
l\geq -c'\E[|W|], \quad u\leq c'\E[|W|], \quad \Big | \int F^{-1}_{W}(p)\mu(dp)\Big | \leq c'\E[|W|].
\end{align*}
\end{proof}

\begin{lemma}\label{lem:BoundRec}
If $\cvfun^n_t$ is given by \eqref{eq:cvfun_ta} and \eqref{eq:cvfun_0a}, then exists $c\in (0,\infty)$ such that 
\begin{align*} 
|\cvfun^n_{0}| \leq c \E[|\cvfun_{1}^n(Z^n_{1})|] \quad \text{and} \quad 
|\cvfun^n_{t}(z)| \leq c \E[|\cvfun_{t+1}^n(Z^n_{\leq t+1})| \mid Z^n_{\leq t} = z]
\end{align*}
for any $z\in\supp(\calL(Z^{n}_{\leq t}))$. 
\end{lemma}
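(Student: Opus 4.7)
The plan is to reduce the claim directly to Lemma \ref{lem:quant_int_bound} by applying the triangle inequality to the defining expression \eqref{eq:cvfun_ta}, and then applying Lemma \ref{lem:quant_int_bound} separately to each of the two integrals.

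First I would fix $t \in \{1,\dots,T-1\}$ and $z \in \supp(\calL(Z^n_{\leq t}))$. Using the definition
\begin{align*}
\cvfun^n_t(z)
=\lambda_t\int F^{-1}_{\cvfun^n_{t+1}(Z^n_{\leq t+1}) \mid Z^n_{\leq t} = z}(p)\mu_t^1(dp)+(1-\lambda_t)\int F^{-1}_{-\cvfun^n_{t+1}(Z^n_{\leq t+1}) \mid Z^n_{\leq t} = z}(p)\mu_t^2(dp),
\end{align*}
I would invoke the triangle inequality together with $\lambda_t, 1-\lambda_t \in [0,1]$ to bound $|\cvfun^n_t(z)|$ by the sum of the absolute values of the two integrals. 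Each integrand is the quantile function of a random variable (either $\cvfun^n_{t+1}(Z^n_{\leq t+1})$ or $-\cvfun^n_{t+1}(Z^n_{\leq t+1})$) under the regular conditional distribution given $Z^n_{\leq t}=z$, integrated against a measure in $\calP([0,1])'$.

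Next I would apply Lemma \ref{lem:quant_int_bound} to each integral, noting that the random variable and its negative have the same absolute value, so both bounds take the form of a constant times $\E[|\cvfun^n_{t+1}(Z^n_{\leq t+1})| \mid Z^n_{\leq t} = z]$. Lemma \ref{lem:quant_int_bound} yields two constants $c_t^1, c_t^2 \in (0,\infty)$, depending only on $\mu_t^1$ and $\mu_t^2$ respectively (through either the density bound or the constants $a,b$ of the support), giving
\begin{align*}
|\cvfun^n_t(z)| \leq (\lambda_t c_t^1 + (1-\lambda_t)c_t^2)\,\E[|\cvfun^n_{t+1}(Z^n_{\leq t+1})| \mid Z^n_{\leq t}=z].
\end{align*}
The constant $c = \max_{0\leq t\leq T-1}(\lambda_t c_t^1 + (1-\lambda_t)c_t^2)$ is finite since we take a maximum over a finite set, and it works uniformly in $t$ and $n$.

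For the unconditional case $|\cvfun^n_0|$ defined by \eqref{eq:cvfun_0a}, the argument is identical, with the conditional quantile functions replaced by ordinary quantile functions of $\cvfun^n_1(Z^n_1)$ and $-\cvfun^n_1(Z^n_1)$, and the conditional expectation replaced by $\E[|\cvfun^n_1(Z^n_1)|]$. I do not expect any real obstacle; the only minor point worth being explicit about is that Lemma \ref{lem:quant_int_bound} is applied under the regular conditional distribution $\kappa(z,\cdot)$ viewed as a probability measure on $\R$, so the conditional expectation $\E[|\cvfun^n_{t+1}(Z^n_{\leq t+1})| \mid Z^n_{\leq t}=z]$ is precisely $\int |w|\,\calL(\cvfun^n_{t+1}(Z^n_{\leq t+1})\mid Z^n_{\leq t}=z)(dw)$, matching the $\E[|W|]$ appearing in Lemma \ref{lem:quant_int_bound}.
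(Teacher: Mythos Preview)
Your proposal is correct and follows essentially the same approach as the paper: apply the triangle inequality to the two quantile integrals in \eqref{eq:cvfun_ta} and then invoke Lemma \ref{lem:quant_int_bound} on each, with the $t=0$ case handled identically. Your version is slightly more explicit about the dependence of the constants (only on $\mu_t^1,\mu_t^2$, hence uniform in $n$ and $z$) and about taking a maximum over $t$, but this is just a matter of presentation.
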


\begin{proof}
We proof the inequality for $t\geq 1$. The argument for $t=0$ is identical. 
Since 
\begin{align*} 
|\cvfun^n_t(z)| 
\leq \Big |\int F^{-1}_{\cvfun^n_{t+1}(Z^n_{\leq t+1}) \mid Z^n_{\leq t} = z}(p) \mu_t^1(dp) \Big | 
+ \Big | \int F^{-1}_{-\cvfun^n_{t+1}(Z^n_{\leq t+1}) \mid Z^n_{\leq t} = z}(p) \mu_t^2(dp) \Big |, 
\end{align*}
we can apply Lemma \ref{lem:quant_int_bound}, bounding the right-hand side by $c\E[|\cvfun^n_{t+1}(Z^n_{\leq t})| \mid Z^n_{\leq t} = z]$ for some $c$. 
\end{proof}

The following result is similar to Proposition 6 in \cite{engsner2017}.

\begin{lemma}\label{lem:gaussian_value}
Let $(Z_t)_{t=1}^T=(X_t,Y_t)_{t=1}^T$ be a Gaussian process, let $\calF_0=\{\Omega,\emptyset\}$, and let $\calF_t=\sigma(Z_{\leq t})$ for $t\geq 1$.   
Let $V_T=0$ and, for $t\in\{0,1,\dots,T-1\}$, let $V_t=\phi_t(X_{t+1}+V_{t+1})$, where $\phi_t$ is positive homogeneous and conditionally cash additive. 
Suppose that for each $t$,  
\begin{align*}
\text{if}\quad \P(W \in \cdot \mid \calF_t)=\P(\widetilde{W}\in \cdot \mid \calF_t), \quad\text{then}\quad 
\phi_t(W)=\phi_t(\widetilde{W}),
\end{align*}
and if $W$ is independent of $\calF_t$, then $\phi_t(W)$ is nonrandom. 
Then 
\begin{align*}
V_t=\E\bigg[\sum_{u=t+1}^T X_{u}\mid \calF_t\bigg]+\sum_{u=t+1}^T \phi_{u-1}(\varepsilon_u)\Var\bigg(\E\bigg[\sum_{v=u}^T X_v\mid \calF_u\bigg]\mid\calF_{u-1}\bigg)^{1/2},
\end{align*}
where $\varepsilon_u$ is standard normal and independent of $\calF_{u-1}$. 
\end{lemma}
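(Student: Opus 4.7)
The plan is to prove the formula by backward induction on $t$. The base case $t=T$ is immediate: $V_T=0$ equals the right-hand side, which is an empty sum.

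For the inductive step, assume the formula holds at time $t+1$ and substitute it into $V_t=\phi_t(X_{t+1}+V_{t+1})$. Since $X_{t+1}$ is $\calF_{t+1}$-measurable, absorbing it into the conditional expectation gives $X_{t+1}+V_{t+1}=A+B$, where
\begin{align*}
A=\E\bigg[\sum_{u=t+1}^{T}X_u\mid\calF_{t+1}\bigg],\quad
B=\sum_{u=t+2}^{T}\phi_{u-1}(\varepsilon_u)\,\Var\bigg(\E\Big[\sum_{v=u}^{T}X_v\mid\calF_u\Big]\mid\calF_{u-1}\bigg)^{1/2}.
\end{align*}
The quantity $B$ is a deterministic real number: the conditional variances are nonrandom because $(Z_t)$ is Gaussian, and each $\phi_{u-1}(\varepsilon_u)$ is nonrandom by the standing assumption that $\phi_{u-1}$ evaluated on a variable independent of $\calF_{u-1}$ produces a nonrandom value.

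The key input from Gaussianity is the decomposition of $A$. Since $A$ is an affine function of $Z_{\leq t+1}$, and $(Z_{\leq t+1},Z_{\leq t})$ is jointly Gaussian, the residual $A-\E[A\mid\calF_t]$ is centered Gaussian, independent of $\calF_t$, with nonrandom variance $\sigma_t^2:=\Var(A\mid\calF_t)=\Var(\E[\sum_{v=t+1}^{T}X_v\mid\calF_{t+1}]\mid\calF_t)$. Consequently, the conditional law of $A-\E[A\mid\calF_t]$ given $\calF_t$ coincides with that of $\sigma_t\varepsilon_{t+1}$, where $\varepsilon_{t+1}$ is standard normal and independent of $\calF_t$.

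Combining these pieces via conditional cash additivity (which strips off the $\calF_t$-measurable constants $B$ and $\E[A\mid\calF_t]$), the distribution-invariance assumption on $\phi_t$, and positive homogeneity, we obtain
\begin{align*}
V_t=\phi_t(A+B)=\E[A\mid\calF_t]+B+\phi_t\!\left(A-\E[A\mid\calF_t]\right)=\E[A\mid\calF_t]+B+\sigma_t\phi_t(\varepsilon_{t+1}).
\end{align*}
The tower property rewrites $\E[A\mid\calF_t]$ as $\E[\sum_{u=t+1}^{T}X_u\mid\calF_t]$, and incorporating $\sigma_t\phi_t(\varepsilon_{t+1})$ as the missing $u=t+1$ term of the sum reconstructs the claimed formula at time $t$.

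The main obstacle is the Gaussianity-based decomposition: one must argue that $A-\E[A\mid\calF_t]$ is genuinely independent of $\calF_t$ (not merely uncorrelated), so that its conditional law given $\calF_t$ is $N(0,\sigma_t^2)$ and the distribution-invariance hypothesis on $\phi_t$ can legitimately replace it by $\sigma_t\varepsilon_{t+1}$. Everything else is routine bookkeeping with the linearity and scaling properties of $\phi_t$ and the tower property.
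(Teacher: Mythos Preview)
Your proposal is correct and follows essentially the same backward-induction argument as the paper's proof: both decompose $X_{t+1}+V_{t+1}$ into an $\calF_t$-measurable part plus a Gaussian residual independent of $\calF_t$, then apply conditional cash additivity, the distribution-invariance assumption, and positive homogeneity. Your version is in fact slightly more explicit than the paper's, since you spell out why the induction hypothesis makes $V_{t+1}$ an affine function of $Z_{\leq t+1}$ plus a constant (your $B$), which is exactly what justifies the joint Gaussianity that the paper simply assumes at the start of its induction step.
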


\begin{proof}
The lemma is proved by backward induction. We prove the induction step. Suppose that $X_{t+1}+V_{t+1}$ and $Z_{\leq t}$ are jointly Gaussian. Then there exists a standard normal $\varepsilon_{t+1}$ independent of $\calF_t$ such that 
\begin{align*}
\P(X_{t+1}+V_{t+1}\in \cdot \mid \calF_t)=\P\Big(\E[X_{t+1}+V_{t+1}\mid \calF_t]+\varepsilon_{t+1}\Var(X_{t+1}+V_{t+1}\mid \calF_t)^{1/2}\in \cdot \mid\calF_t\Big)
\end{align*}
and it follows from properties of the multivariate normal distribution that the conditional variance $\Var(X_{t+1}+V_{t+1}\mid \calF_t)$ is nonrandom. Hence, by positive homogeneity and conditional cash additivity, 
\begin{align*}
\phi_t(X_{t+1}+V_{t+1})=\E[X_{t+1}+V_{t+1}\mid \calF_t]+\phi_t(\varepsilon_{t+1})\Var(X_{t+1}+V_{t+1}\mid \calF_t)^{1/2}.
\end{align*}
By the induction assumption (in the nontrivial case $t+1<T$),   
\begin{align*}
V_{t+1}=\E\bigg[\sum_{u=t+2}^T X_{u}\mid \calF_{t+1}\bigg]+\sum_{u=t+2}^T \phi_{u-1}(\varepsilon_u)\Var\bigg(\E\bigg[\sum_{v=u}^T X_v\mid \calF_u\bigg]\mid\calF_{u-1}\bigg)^{1/2}. 
\end{align*}
Summing up the terms and using the tower property of conditional expectations complete the proof. 
\end{proof}

\begin{lemma}\label{lem:TowerUI}
Let $s \in \{1, \dots, t-1\}$ and let $(W^n)_{n\in\N}$ be a sequence of non-negative random variables such that 
\begin{align}\label{eq:wczTUI}
\calL(W^n \mid Z^n_{\leq s} = z^n_{\leq s}) \weakly \calL(W \mid Z_{\leq s} = z_{\leq s}) \quad\text{as } n\to\infty
\end{align}
and 
\begin{align}\label{eq:uizTUI}
\Big(\calL(W^n \mid Z^n_{\leq s} = z^n_{\leq s})\Big)_{n\in\N} \quad\text{is uniformly integrable.}
\end{align}
If $\widetilde{W}^n = \E[W^n \mid Z^n_{\leq t}]$, then $\big(\calL(\widetilde{W}^n \mid Z^n_{\leq s} = z^n_{\leq s}) \big)_{n\in\N}$ is uniformly integrable.
\end{lemma}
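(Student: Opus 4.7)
The plan is to prove the lemma via the de la Vallée Poussin criterion together with a conditional Jensen inequality and the tower property, exploiting the assumption $s<t$ which gives $\sigma(Z^n_{\leq s})\subseteq \sigma(Z^n_{\leq t})$. Notably, the weak convergence hypothesis \eqref{eq:wczTUI} appears not to be needed; the uniform integrability assumption \eqref{eq:uizTUI} alone should suffice.

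First I would invoke the de la Vallée Poussin criterion (forward direction): from the uniform integrability of the family $(\calL(W^n \mid Z^n_{\leq s}=z^n_{\leq s}))_n$ extract an increasing convex function $G:\R_+\to\R_+$ with $G(x)/x\to\infty$ as $x\to\infty$, and a constant $M<\infty$, such that
\begin{align*}
\int G(w)\,\calL(W^n \mid Z^n_{\leq s}=z^n_{\leq s})(dw) \leq M \quad\text{for every } n\in\N.
\end{align*}

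Next, since $\widetilde{W}^n=\E[W^n \mid Z^n_{\leq t}]$ and $G$ is convex, the conditional Jensen inequality gives $G(\widetilde{W}^n)\leq \E[G(W^n)\mid Z^n_{\leq t}]$ almost surely. Taking conditional expectation given $Z^n_{\leq s}$ and using the tower property (valid because $\sigma(Z^n_{\leq s})\subseteq \sigma(Z^n_{\leq t})$) yields $\E[G(\widetilde{W}^n)\mid Z^n_{\leq s}]\leq \E[G(W^n)\mid Z^n_{\leq s}]$ almost surely. Rephrasing the two sides as integrals against the respective regular conditional distributions and evaluating at $z^n_{\leq s}$ gives
\begin{align*}
\int G(w)\,\calL(\widetilde{W}^n \mid Z^n_{\leq s}=z^n_{\leq s})(dw) \leq M \quad\text{for every } n\in\N.
\end{align*}
A final application of the de la Vallée Poussin criterion (reverse direction) then yields uniform integrability of $(\calL(\widetilde{W}^n \mid Z^n_{\leq s}=z^n_{\leq s}))_n$, as required.

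The main point to take care of is the passage from the almost sure inequality between conditional expectations to the pointwise inequality at the specific value $z^n_{\leq s}$. This is handled by fixing regular versions $\kappa_n(z,\cdot)$ and $\widetilde{\kappa}_n(z,\cdot)$ of $\calL(W^n\mid Z^n_{\leq s}=z)$ and $\calL(\widetilde{W}^n\mid Z^n_{\leq s}=z)$, and noting that both sides of the almost sure inequality are, by construction, versions of $z\mapsto \int G(w)\widetilde{\kappa}_n(z,dw)$ and $z\mapsto \int G(w)\kappa_n(z,dw)$ respectively, which are the objects we actually need to compare. This is a routine but necessary bookkeeping step rather than a genuine obstacle.
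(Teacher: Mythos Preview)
Your approach is correct and genuinely different from the paper's. The paper proceeds by a direct truncation argument: it rewrites $\E[\widetilde{W}^{n}I\{\widetilde{W}^{n}\geq r\}\mid Z^n_{\leq s}=z^n_{\leq s}]$ as $\E[W^{n}I\{\widetilde{W}^{n}\geq r\}\mid Z^n_{\leq s}=z^n_{\leq s}]$ via the tower property, then splits on $\{W^n\leq \gamma_r\}$ versus $\{W^n>\gamma_r\}$, and handles the first piece through Markov's inequality for $\widetilde{W}^n$ combined with the convergence of $\E[W^n\mid Z^n_{\leq s}=z^n_{\leq s}]$ (this is where the paper invokes \eqref{eq:wczTUI} together with \eqref{eq:uizTUI}). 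Your route via de la Vall\'ee Poussin plus conditional Jensen is shorter and more conceptual: the single inequality $G(\E[W^n\mid Z^n_{\leq t}])\leq \E[G(W^n)\mid Z^n_{\leq t}]$ followed by the tower property does all the work at once.

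Your remark that the weak-convergence hypothesis \eqref{eq:wczTUI} is superfluous is well taken. Even within the paper's own argument it is used only to ensure that $\sup_n\E[W^n\mid Z^n_{\leq s}=z^n_{\leq s}]<\infty$, which already follows from uniform integrability. So your proof is not merely alternative but slightly sharper in its hypotheses. The bookkeeping point you flag about passing from the almost sure inequality to the specific evaluation at $z^n_{\leq s}$ is exactly the same issue present (and left implicit) in the paper's proof; you are right that it is routine once regular versions are fixed, and since $G(W^n)\geq 0$ the tower step is valid without any extra integrability assumption on $G(W^n)$.
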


\begin{proof}
First note that
\begin{align*}
\E\Bigl[\widetilde{W}^nI\{\widetilde{W}^n \geq r\}\mid Z^n_{\leq s} \Bigr] 
& =\E\Bigl[\E[W^n \mid Z^n_{\leq t}]I\{\widetilde{W}^n \geq r\}\mid Z^n_{\leq s} \Bigr] \\
& =\E\Bigl[\E[W^n I\{\widetilde{W}^n \geq r\} \mid Z^n_{\leq t}]\mid Z^n_{\leq s}\Bigr]\\
& =\E\Bigl[W^n I\{\widetilde{W}^n \geq r\} \mid Z^n_{\leq s}\Bigr],
\end{align*}
i.e. $\E[\widetilde{W}^{n}I\{\widetilde{W}^{n} \geq r\}\mid Z^n_{\leq s}=z^n_{\leq s}]=\E[W^{n} I\{\widetilde{W}^{n} \geq r\} \mid Z^n_{\leq s}=z^n_{\leq s}]$. 
For each $\gamma_r > 0$, 
\begin{align}\label{bound:23}
&\E[W^{n}I\{\widetilde{W}^{n} \geq r\}\mid Z^n_{\leq s}=z^n_{\leq s}]  \nonumber \\
&\quad= \E[W^{n}I\{\widetilde{W}^{n} \geq r\} I\{W^n \leq \gamma_r\}\mid Z^n_{\leq s}=z^n_{\leq s}] \nonumber \\
&\quad\quad + \E[W^{n}I\{\widetilde{W}^{n} \geq r\} I\{W^n > \gamma_r\}\mid Z^n_{\leq s}=z^n_{\leq s}] \nonumber \\
&\quad \leq \gamma_{r}\P(\widetilde{W}^{n} \geq r \mid Z^n_{\leq s}=z^n_{\leq s}) +\E[W^{n}I\{W^{n}>\gamma_{r}\}\mid Z^n_{\leq s}=z^n_{\leq s}].
\end{align}
By \eqref{eq:wczTUI} and \eqref{eq:uizTUI}, 
$$
\E[\widetilde{W}^{n}I\{\widetilde{W}^{n} \geq r\}\mid Z^n_{\leq s}=z^n_{\leq s}] < \infty \quad\text{for all } n\in\N,
$$
and similarly with $W^{n}$ instead of $\widetilde{W}^{n}$. 
Therefore, uniform integrability of $\big(\calL(\widetilde{W}^n \mid Z^n_{\leq s} = z^n_{\leq s}) \big)_{n\in\N}$ is equivalent to (cf.~\cite{Kallenberg-02} p.~67) 
\begin{align*}
\lim_{r\to\infty}\limsup_{n\to\infty}\E[\widetilde{W}^{n}I\{\widetilde{W}^n\geq r\}\mid Z^n_{\leq s}=z^n_{\leq s}]=0 
\end{align*}
which follows if we show that there exists $\gamma_{r}$ such that $\lim_{r \to \infty}\limsup_{n\to\infty}[\eqref{bound:23}]=0$.  
By the uniform integrability assumption \eqref{eq:uizTUI} and the finiteness of all the terms, 
\begin{align*}
\lim_{r\to\infty}\limsup_{n\to\infty}\E[W^{n}I\{W^{n}>\gamma_{r}\}\mid Z^n_{\leq s}=z^n_{\leq s}]=0. 
\end{align*}
Therefore, it only remains to show that 
$$
\lim_{r \to \infty}\limsup_{n\to\infty}\gamma_{r}\P(\widetilde{W}^n \geq r \mid Z^n_{\leq s} = z^n_{\leq s})=0
$$ 
for some sequence $\gamma_{r}\to\infty$. By Markov's inequality,
\begin{align*}
\P(\widetilde{W}^n \geq r \mid Z^n_{\leq s} = z^n_{\leq s})
&\leq \frac{ \E[\widetilde{W}^n \mid Z^n_{\leq s} = z^n_{\leq s}]}{r} \\
&= \frac{\E[W^n \mid Z^n_{\leq s} = z^n_{\leq s}]}{r} \\
&\to \frac{\E[W \mid Z_{\leq s} = z_{\leq s}]}{r}, 
\end{align*}
where the convergence in the last step follows from Lemma 4.11 in \cite{Kallenberg-02}, using \eqref{eq:wczTUI} and \eqref{eq:uizTUI}.   Letting e.g.~$\gamma_r = \sqrt{r}$, the first term in \eqref{bound:23} can therefore be made arbitrarily small by choosing $r$ sufficiently large.
\end{proof}

\begin{lemma}\label{lem:BoundUIUtility}
If $\crfun^n_t,\cvfun^n_t$ are given by \eqref{eq:ConvRisk3}, \eqref{eq:ConvUtility3} and \eqref{eq:ConvRU3}, then exists $c\in (0,\infty)$ such that 
\begin{align*} 
|\cvfun^n_{0}| \leq c \E[|\cvfun_{1}^n(Z^n_{1})|] \quad \text{and} \quad 
|\cvfun^n_{t}(z)| \leq c \E[|\cvfun_{t+1}^n(Z^n_{\leq t+1})| \mid Z^n_{\leq t} = z]
\end{align*}
for any $z\in\supp(\calL(Z^{n}_{\leq t}))$. 
\end{lemma}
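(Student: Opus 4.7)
The plan is to prove the claimed bound by splitting $\cvfun^n_t(z)$ into its two defining pieces via the triangle inequality, bounding each in terms of the conditional $L^1$-norm of $\cvfun^n_{t+1}(Z^n_{\leq t+1})$ given $Z^n_{\leq t}=z$. For brevity write $m = \crfun^n_t(z)$ and $W = \cvfun^n_{t+1}(Z^n_{\leq t+1})$ and let $\E_z[\cdot]$ denote $\E[\,\cdot \mid Z^n_{\leq t} = z]$. From \eqref{eq:ConvUtility3} one immediately gets
\begin{align*}
|\cvfun^n_t(z)| \leq |m| + \gamma_t\, \E_z\!\left[\big((m-W)^+\big)^{\beta_t}\right]^{1/\beta_t}.
\end{align*}

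The first step is to handle $|m| = |\crfun^n_t(z)|$. Since $m$ is a $\mu_t$-mixture of conditional quantiles of $-W$ with $\mu_t\in\calP([0,1])'$, Lemma \ref{lem:quant_int_bound} (applied to the conditional distribution of $-W$ given $Z^n_{\leq t}=z$) yields a constant $c_1\in(0,\infty)$, depending only on $\mu_t$, such that $|m| \leq c_1\,\E_z[|W|]$.

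The second step treats the utility/penalty term. Because $\beta_t\in(0,1]$, the map $x\mapsto x^{\beta_t}$ is concave on $[0,\infty)$, so conditional Jensen's inequality gives
\begin{align*}
\E_z\!\left[\big((m-W)^+\big)^{\beta_t}\right]^{1/\beta_t} \leq \E_z\!\left[(m-W)^+\right] \leq \E_z\!\left[|m|+|W|\right] = |m| + \E_z[|W|].
\end{align*}
Combining the two steps and using $|m|\leq c_1\E_z[|W|]$ once more yields
\begin{align*}
|\cvfun^n_t(z)| \leq c_1 \E_z[|W|] + \gamma_t\bigl(c_1+1\bigr)\E_z[|W|] = \bigl(c_1 + \gamma_t(c_1+1)\bigr)\,\E_z[|W|],
\end{align*}
which is the desired inequality with $c = c_1+\gamma_t(c_1+1)$. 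The bound for $\cvfun^n_0$ follows verbatim from the same argument with \eqref{eq:ConvRU3} in place of \eqref{eq:ConvRisk3}--\eqref{eq:ConvUtility3} and with the (unconditional) expectation replacing $\E_z[\cdot]$.

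No step is genuinely hard: Lemma \ref{lem:quant_int_bound} does the real work for the risk-measure part, and the $L^{\beta_t}$-to-$L^1$ reduction for the penalty term is a single application of Jensen's inequality permitted by $\beta_t\in(0,1]$. The only mild subtlety is recognizing that conditional Jensen applies here because, given $Z^n_{\leq t}=z$, the quantity $m$ is a constant and $(m-W)^+$ is a non-negative random variable, so the concavity of $x\mapsto x^{\beta_t}$ on $[0,\infty)$ suffices.
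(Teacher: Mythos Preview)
Your proof is correct and follows essentially the same approach as the paper's: both use Lemma~\ref{lem:quant_int_bound} to bound $|\crfun^n_t(z)|$ by $c_1\E_z[|W|]$, reduce the $L^{\beta_t}$-penalty term to an $L^1$-quantity via Jensen's inequality (valid since $x\mapsto x^{\beta_t}$ is concave on $[0,\infty)$ for $\beta_t\in(0,1]$), and arrive at the same constant $c=c_1+\gamma_t(c_1+1)$. The only cosmetic difference is that the paper first replaces $m$ by its upper bound $c_1\E_z[|W|]$ inside $(m-W)^+$ (using monotonicity of the positive part in $m$) and then applies Jensen, whereas you apply Jensen first and then bound $|m|$; your ordering is slightly more direct.
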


\begin{proof}[Proof of Lemma \ref{lem:BoundUIUtility}]
Consider $t\geq 1$. 
Note that, by Lemma \ref{lem:quant_int_bound}, there exists $c' > 0$, such that
$$
|\crfun^n_t(z)| \leq c' \E[|\cvfun^n_{t+1}(Z^n_{\leq t+1})| \mid Z^n_{\leq t} = z].
$$
Hence, 
\begin{align*}
|\cvfun^n_t(z)| &\leq c' \E[ |\cvfun^n_{t+1}(Z^n_{\leq t+1})| \mid Z^n_{\leq t} = z] \\
& \quad +\gamma_t\E \Big[ \Big( \big(c' \E[ |\cvfun^n_{t+1}(Z^n_{\leq t+1})| \mid Z^n_{\leq t} = z]-\cvfun^n_{t+1}(Z^n_{\leq t+1})\big)^+ \Big)^{\beta_t} \mid Z^n_{\leq t} = z\Big]^{1/\beta_t}.
\end{align*}
By Jensen's inequality, the second summand above is bounded by
\begin{align*}
&\gamma_t\E \Big[ \big(c' \E[ |\cvfun^n_{t+1}(Z^n_{\leq t+1})| \mid Z^n_{\leq t} = z]- \cvfun^n_{t+1}(Z^n_{\leq t+1}) \big)^+\mid Z^n_{\leq t} = z  \Big] \\
&\quad \leq \gamma_t\E \Big[ c' \E[ |\cvfun^n_{t+1}(Z^n_{\leq t+1})| \mid Z^n_{\leq t} = z] + |\cvfun^n_{t+1}(Z^n_{\leq t+1})| \mid Z^n_{\leq t} = z\Big] \\
& \quad = (c' + 1) \gamma_t \E[ |\cvfun^n_{t+1}(Z^n_{\leq t+1})| \mid Z^n_{\leq t} = z].
\end{align*}
Hence, the statement holds with $c=c'+\gamma_t(c'+1)$. 
The inequality 
$|\cvfun^n_{0}| \leq c \E[|\cvfun_{1}^n(Z^n_{1})|]$ is shown by the same arguments. 
\end{proof}

\begin{proof}[Proof of Theorem \ref{thm:CondExpVar}]
The proof proceeds in a similar manner as the proof of Theorem \ref{thm:convergentvaluations}. For $t=1,\dots,T$, consider mappings $\cvfun^n_t:\supp(\calL(Z^n_{\leq t}))\to \R$ given by 
\begin{align*}
\cvfun^n_t(z^n_{\leq t}) = 
 \left\{\begin{array}{ll}
\sum_{s=1}^T x^n_s, & t=T,\\
 \E \big[\cvfun^n_{t+1}(Z^n_{\leq t+1}) \mid Z^n_{\leq t} = z^n_{\leq t}\big] + c_t \Var\big(\cvfun^n_{t+1}(Z^n_{\leq t+1}) \mid Z^n_{\leq t} = z^n_{\leq t}\big)^{1/2}, & t<T,
\end{array}\right.
\end{align*}
and 
\begin{align*}
\cvfun^n_0 =  \E \big[\cvfun^n_{1}(Z^n_{1}) \big] + c_0 \Var\big(\cvfun^n_{1}(Z^n_{1})\big)^{1/2}. 
\end{align*}
We define $\cvfun_t:\R^t\to \R$ and $\cvfun_0$ analogously without the superscript $n$.  
The proof is complete once we show the convergence $\lim_{n\to\infty}\cvfun^n_0=\cvfun_0$. 

The argument of the proof is backwards induction. We will show that for $t = T, \dots, 1$, 
\begin{itemize}
\item[{\bf UI($t$):}] $(\calL(\cvfun^n_t(Z^n_{\leq t})^2 \mid Z^n_{\leq s} = z^n_{\leq s}))_n$ is uniformly integrable for each $s = 1, \dots, t-1$ and $z^n_{\leq s} \to z_{\leq s}$,
\item[{\bf CC($t$):}] $(\cvfun^n_t)_n$ is continuously convergent, i.e. $\cvfun^n_t(z^n_{\leq t}) \to \cvfun_t(z_{\leq t})$, whenever $z^n_{\leq t} \to z_{\leq t}$.
\end{itemize}

\noindent
{\bf Induction base:} {\bf UI($T$)} and {\bf CC($T$)} hold.
 
For $s \in \{1, \dots, T-1\}$ and $z^n_{\leq s} \to z_{\leq s}$, {\bf UI($T$)} follows immediately from \eqref{eq:condxnuiII} since $\cvfun^n_{T}(z^n_{\leq T}) = \sum_{s=1}^T x^n_s$.  
Moreover, 
$$
\cvfun^n_{T}(z^n_{\leq T}) = \sum_{s=1}^T x^n_s \to \sum_{s=1}^T x_s = \cvfun_T(z_{\leq T})
$$
whenever $z_{\leq T}^n \to z_{\leq T}$, yielding {\bf CC($T$)}.

\noindent
{\bf Induction step:}{\bf UI($t+1$)}, {\bf CC($t+1$)} together imply {\bf UI($t$)}, {\bf CC($t$)}.
Fix $s\in \{1,\dots,t-1\}$ and a sequence $(z^n_{\leq s})_n$ with $z^n_{\leq s}\to z_{\leq s}$. 
By the same argument as in the proof of Theorem \ref{thm:convergentvaluations}, 
\begin{align}\label{conv:Weak1234}
\calL(\cvfun^n_{t+1}(Z^n_{\leq t+1}) \mid Z^n_{\leq s} = z^n_{\leq s}) \weakly \calL(\cvfun_{t+1}(Z_{\leq t+1}) \mid Z_{\leq s} = z_{\leq s})
\end{align}
which implies 
\begin{align}\label{conv:Weak12345}
\calL((\cvfun^n_{t+1}(Z^n_{\leq t+1}))^2 \mid Z^n_{\leq s} = z^n_{\leq s}) \weakly \calL((\cvfun_{t+1}(Z_{\leq t+1}))^2 \mid Z_{\leq s} = z_{\leq s}).
\end{align}
Lemma \ref{lem:gaussian_value} shows that the limiting distribution in \eqref{conv:Weak1234} is Gaussian. We can bound
\begin{align*}
\cvfun^n_t(z)^2 &\leq 2\E[|\cvfun^n_{t+1}(Z^n_{\leq t+1})| \mid Z^n_{\leq t} = z]^2 + 2c_t^2\E[|\cvfun^n_{t+1}(Z^n_{\leq t+1})|^2 \mid Z^n_{\leq t} = z] \\
&\leq (2+2c_t^2)\E[|\cvfun^n_{t+1}(Z^n_{\leq t+1})|^2 \mid Z^n_{\leq t} = z].
\end{align*}
Therefore, 
\begin{align*}
\E[|\cvfun^n_t(Z^n_{\leq t})|^2 \, \mid Z^n_{\leq s} = z^n_{\leq s}]
&\leq (2+2c_t^2)\E[\E[|\cvfun^n_{t+1}(Z^n_{\leq t+1})|^2 \, \mid Z^n_{\leq t}]\mid Z^n_{\leq s} = z^n_{\leq s}]\\
&=(2+2c_t^2)\E[|\cvfun^n_{t+1}(Z^n_{\leq t+1})|^2 \, \mid Z^n_{\leq s} = z^n_{\leq s}].
\end{align*}
Hence, it is sufficient to show uniform integrability of
\begin{align}\label{eq:UI234}
\Big(\calL\big(\E[|\cvfun^n_{t+1}(Z^n_{\leq t+1})|^2 \mid Z^n_{\leq t}] \mid Z^n_{\leq s}=z^n_{\leq s}\big)\Big)_n.
\end{align}
Setting $\widetilde{W}^{n}=\E[|\cvfun^n_{t+1}(Z^n_{\leq t+1})|^2 \mid Z^n_{\leq t}]$ and $W^{n}=|\cvfun^n_{t+1}(Z^n_{\leq t+1})|^2$, this follows from Lemma \ref{lem:TowerUI} together with \eqref{conv:Weak12345} and {\bf UI($t+1$)}. Hence, {\bf UI($t$)} holds.

We proceed by showing {\bf CC($t$)}. Let $z^n_{\leq t} \to z_{\leq t}$. We need to show convergence of
\begin{align*}
\cvfun^n_t(z^n_{\leq t})  
&= \E[\cvfun^n_{t+1}(Z^n_{\leq t+1}) \mid Z^n_{\leq t} = z^n_{\leq t}] \\  
& \quad + c_t \left(\E[(\cvfun^n_{t+1}(Z^n_{\leq t+1}))^2 \mid Z^n_{\leq t} = z^n_{\leq t}] - \E[\cvfun^n_{t+1}(Z^n_{\leq t+1}) \mid Z^n_{\leq t} = z^n_{\leq t}]^2 \right)^{1/2}.
\end{align*}
Indeed, by \eqref{conv:Weak1234} and \eqref{conv:Weak12345},
\begin{align*}
&\calL(\cvfun^n_{t+1}(Z^n_{\leq t+1}) \mid Z^n_{\leq t}  = z^n_{\leq t}) \weakly \calL(\cvfun_{t+1}(Z_{\leq t+1}) \mid Z_{\leq t} = z_{\leq t}), \\
&\calL((\cvfun^n_{t+1}(Z^n_{\leq t+1}))^2 \mid Z^n_{\leq t}  = z^n_{\leq t}) \weakly \calL((\cvfun_{t+1}(Z_{\leq t+1}))^2 \mid Z_{\leq t} = z_{\leq t}),
\end{align*}
and by \cite[Lemma 4.11]{Kallenberg-02} applied to {\bf UI($t+1$)}, it follows that
$$
\E[\cvfun^n_{t+1}(Z^n_{\leq t+1}) \mid Z^n_{\leq t} = z^n_{\leq t}] \to \E[\cvfun_{t+1}(Z_{\leq t+1}) \mid Z_{\leq t} = z_{\leq t}]
$$
and
$$
\E[(\cvfun^n_{t+1}(Z^n_{\leq t+1}))^2 \mid Z^n_{\leq t} = z^n_{\leq t}] \to \E[(\cvfun_{t+1}(Z_{\leq t+1}))^2 \mid Z_{\leq t} = z_{\leq t}],
$$
yielding {\bf CC($t$)} which completes the proof of the induction step.

As a result of the induction, 
$$
\calL(\cvfun^n_1(Z^n_1)) \weakly \calL(\cvfun_1(Z_1))
$$
and $(\cvfun^n_1(Z^n_1))_n, ((\cvfun^n_1(Z^n_1))^2)_n$ are uniformly integrable. Therefore,
\begin{align*}
\cvfun^n_0 &= \E[\cvfun^n_{1}(Z^n_1)] + c_0\left(\E[(\cvfun^n_{1}(Z^n_1))^2] - \E[\cvfun^n_1(Z^n_1)]^2 \right)^{1/2} \\
 &\to \E[\cvfun_{1}(Z_1)] + c_0\left(\E[(\cvfun_{1}(Z_1))^2] - \E[\cvfun_1(Z_1)]^2 \right)^{1/2} = \cvfun_0.
\end{align*}
\end{proof}

\begin{lemma}\label{lem:V0monotonicity}
Let $c\in \R^T_+$ satisfy $\sum_{t=1}^Tc_t=1$ and $c_{t+1}\leq c_t$ for every $t$. Then $c$ solves 
\begin{align}\label{optproblem}
\max_{d\in \mathcal{D}_{c,T}} \sum_{t=1}^T d_t^{1/2}, 
\quad \mathcal{D}_{c,T}=\bigg\{d\in\R^T_+:\sum_{t=1}^T d_t=1,\sum_{s=1}^t d_s \geq \sum_{s=1}^t c_s \text{ for } t=1,\dots,T\bigg\}.
\end{align}
\end{lemma}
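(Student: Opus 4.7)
The plan is to reduce the problem to a standard majorization inequality. First, observe that $c$ itself lies in $\mathcal{D}_{c,T}$, because at $d=c$ every partial-sum constraint $\sum_{s=1}^t d_s \geq \sum_{s=1}^t c_s$ holds with equality. It therefore suffices to prove the upper bound $\sum_{t=1}^T d_t^{1/2} \leq \sum_{t=1}^T c_t^{1/2}$ for every $d \in \mathcal{D}_{c,T}$.

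The next step exploits the permutation invariance of the objective $d \mapsto \sum_t d_t^{1/2}$. Replacing $d$ by its nonincreasing rearrangement $d^{\downarrow}$ leaves the objective value unchanged, and since among all rearrangements of a vector the nonincreasing one maximizes every partial sum, we have $\sum_{s\leq t} d^{\downarrow}_s \geq \sum_{s\leq t} d_s \geq \sum_{s\leq t} c_s$ for every $t$, so $d^{\downarrow}$ remains feasible. Now $c$ and $d^{\downarrow}$ are both nonincreasing nonnegative vectors with common total mass $1$, and the partial sums of $d^{\downarrow}$ dominate those of $c$; this is precisely the statement that $d^{\downarrow}$ majorizes $c$ in the Hardy--Littlewood--P\'olya sense.

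Finally, the map $x \mapsto \sum_t x_t^{1/2}$ on $\R^T_+$ is symmetric and concave, hence Schur concave, so the majorization $d^{\downarrow} \succ c$ yields $\sum_t (d^{\downarrow}_t)^{1/2} \leq \sum_t c_t^{1/2}$, and combining this with the invariance of the objective under rearrangement completes the argument. There is no serious obstacle; the only point that requires a moment of thought is the preservation of feasibility under rearrangement, which is immediate from the partial-sum maximality of the nonincreasing rearrangement. An alternative and equally short route is via KKT: the Lagrange multipliers associated with the partial-sum constraints work out to $\mu_t = \tfrac{1}{2}\bigl(c_{t+1}^{-1/2} - c_t^{-1/2}\bigr)$, which are nonnegative precisely because $c$ is nonincreasing---exactly matching the hypothesis of the lemma.
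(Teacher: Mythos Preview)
Your proof is correct and takes a genuinely different route from the paper's. The paper argues directly from convexity: it observes that the feasible set is convex and the objective concave, reduces to the strictly positive case, and then shows that every feasible direction $d-c$ lies in the positive cone generated by the elementary vectors $b_i=(0,\dots,0,1,-1,0,\dots,0)$; the gradient inequality $g(d)-g(c)\le \nabla g(c)^{\trans}(d-c)=\tfrac12\sum_k\lambda_k\bigl(c_k^{-1/2}-c_{k+1}^{-1/2}\bigr)\le 0$ then finishes the job. Your argument instead packages the same content as a majorization statement: feasibility together with the monotonicity of $c$ means exactly that $d\succ c$ in the Hardy--Littlewood--P\'olya sense, and Schur concavity of $x\mapsto\sum_t x_t^{1/2}$ gives the inequality in one stroke. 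Your approach is shorter and invokes a standard theory; the paper's is more self-contained and, amusingly, its $b_i$ decomposition is precisely the elementary transfer step that underlies the proof of the Schur concavity criterion, so the two arguments are close cousins. Your closing KKT remark is also essentially the paper's gradient computation, modulo the caveat that the multipliers $c_t^{-1/2}$ are only defined where $c_t>0$, which the paper handles by first truncating to the support of $c$.
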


\begin{proof}
Since $\mathcal{D}_{c,T}$ is a convex set and the objective function is concave, we have a convex optimization problem. Consequently, it is sufficient to verify that $c\in\mathcal{D}_{c,T}$ is a local optimum. 
By assumption either $c_1=1$ and $c_t=0$ for $t\geq 2$ or there exists $t_0\in\{2,\dots,T\}$ such that $c_1\geq \dots \geq c_{t_0}>0$ and $c_{t_0+1}=\dots=c_{T}=0$. The first case has the trivial maximizer $d_1=1$ and $d_t=0$ for $t\geq 2$. Hence, it is sufficient to consider only the second case for a fixed $c$ and $t_0\in\{2,\dots,T\}$. In this case, \eqref{optproblem} is equivalent to, with $c=(c_1,\dots,c_{t_0})$,  
\begin{align}\label{modoptproblem}
\max_{d\in \mathcal{D}_c} \sum_{t=1}^{t_0} d_t^{1/2}, 
\quad \mathcal{D}_c=\bigg\{d\in\R^{t_0}_+:\sum_{t=1}^{t_0} d_t=1,\sum_{s=1}^t d_s \geq \sum_{s=1}^t c_s \text{ for } t=1,\dots,t_0\bigg\}.
\end{align}
Since $\mathcal{D}_c$ is a convex set and the objective function is concave, we have a convex optimization problem. Consequently, it is sufficient to verify that $c\in\mathcal{D}_c$ is a local optimum. 
Let 
\begin{align*}
\mathcal{X}=\bigg\{x\in\R^{t_0}:\sum_{t=1}^{t_0} x_t=0,\sum_{s=1}^t x_s \geq 0 \text{ for } t=1,\dots,t_0\bigg\}
\end{align*}
and note that $\mathcal{D}_c\subset c+\mathcal{X}$. 
Consider the set of vectors $\{b_1,\dots,b_{t_0-1}\}\subset \R^{t_0}$ given by 
\begin{align*}
b_{i,i}=1, \; b_{i,i+1}=-1, \; b_{i,j}=0 \text{ for } j\notin \{i,i+1\}.
\end{align*}
We claim that $\mathcal{X}=\textrm{span}_+\{b_1,\dots,b_{t_0-1}\}$, where 
\begin{align*}
\textrm{span}_+\{b_1,\dots,b_{t_0-1}\}=\bigg\{\sum_{k=1}^{t_0-1}\lambda_k b_k : \lambda_1,\dots,\lambda_{t_0-1}\geq 0\bigg\}.
\end{align*}
To show that $\textrm{span}_+\{b_1,\dots,b_{t_0-1}\}\subseteq \mathcal{X}$ it is sufficient to note that 
\begin{align*}
&\sum_{s=1}^{t_0}\sum_{k=1}^{t_0-1}\lambda_k b_{k,s}=\sum_{k=1}^{t_0-1}\lambda_k\sum_{s=1}^{t_0}b_{k,s}=0, \\
&\sum_{s=1}^t\sum_{k=1}^{t_0-1}\lambda_k b_{k,s}=\sum_{k=1}^{t_0-1}\lambda_k\sum_{s=1}^tb_{k,s}=\lambda_t\geq 0.
\end{align*}
We now show that $\mathcal{X}\subseteq\textrm{span}_+\{b_1,\dots,b_{t_0-1}\}$. Take $x\in\mathcal{X}$ and set $\lambda_1=x_1$ (noting that $x_1\geq 0$) and $\lambda_k=\lambda_{k-1}+x_k$ for $k\geq 2$. Hence, $\lambda_k\geq 0$ for every $k$ and $\sum_{k=1}^{t_0-1}\lambda_kb_k=x$. 

Let $g(d)=\sum_{t=1}^{t_0} d_t^{1/2}$ and note that $g$ is well defined and concave on $\mathcal{D}_c$, and has a well defined gradient at $c$: 
\begin{align*}
\nabla g(c)^{\trans}=\frac{1}{2}(c_1^{-1/2},\dots,c_{t_0}^{-1/2}).
\end{align*} 
Take $d\in\mathcal{D}_c$ and notice that $d=c+x$ for $x\in\mathcal{X}$. Since $g$ is concave, 
\begin{align*}
g(d)-g(c)\leq \nabla g(c)^{\trans}x
=\sum_{k=1}^{t_0-1}\lambda_k\nabla g(c)^{\trans}b_k
=\frac{1}{2}\sum_{k=1}^{t_0-1}\lambda_k(c_k^{-1/2}-c_{k+1}^{-1/2})\leq 0,
\end{align*}
where the last inequality holds since, for every $k$, $\lambda_k\geq 0$ and $c_k^{-1/2}-c_{k+1}^{-1/2}\leq 0$. 
Hence, we have shown that $g(d)\leq g(c)$ and the proof is complete. 
\end{proof}

\begin{proof}[Proof of Theorem \ref{thm:V0monotonicity}]
For $t=1,\dots,T$, set 
\begin{align*}
c_t&=\Var\bigg(\sum_{s=1}^TX_s\bigg)^{-1}\bigg(\Var\bigg(\sum_{s=1}^TX_s\mid \calF_{t-1}\bigg)-\Var\bigg(\sum_{s=1}^TX_s\mid \calF_t\bigg)\bigg),\\
d_t&=\Var\bigg(\sum_{s=1}^TX_s\bigg)^{-1}\bigg(\Var\bigg(\sum_{s=1}^TX_s\mid \calG_{t-1}\bigg)-\Var\bigg(\sum_{s=1}^TX_s\mid \calG_t\bigg)\bigg). 
\end{align*}
By construction, $\sum_{t=1}^Tc_t=\sum_{t=1}^Td_t=1$. Since $\calF_t\subseteq \calG_t$ for every $t$, $\sum_{s=t+1}^Tc_s\geq \sum_{s=t+1}^Td_s$ which is equivalent to $\sum_{s=1}^t d_s \geq \sum_{s=1}^t c_s$ for every $t$ since $\sum_{t=1}^Tc_t=\sum_{t=1}^Td_t$. Applying Lemma \ref{lem:V0monotonicity} completes the proof.   
\end{proof}

\section*{Acknowledgements}
F.~Lindskog would like to acknowledge financial support from the Swedish Research Council, Project 2020-05065, and from L\"ansf\"ors\"akringars Forskningsfond, Project P9.20.


\begin{thebibliography}{99}
\bibitem{Albrecher-et-al-22}
Albrecher, H.-J., Eisele, K.-T., Steffensen, M., W\"uthrich, M. V. (2022).  
On the cost-of-capital rate under incomplete market valuation,  
\textit{Journal of Risk and Insurance} 89, 1139--1158. 
\bibitem{Aldous-81}
Aldous, D.J. (1981). 
Weak convergence and general theory of processes,  
Unpublished draft of monograph; Department of Statistics, University of California, Berkeley, CA 94720.  
\bibitem{Artzner-et-al-07}
Artzner, P., Delbaen, F., Eber, J.-M., Heath, D., Ku, H. (2007).
Coherent multiperiod risk adjusted values and Bellman's principle, 
\textit{Annals of Operations Research} 152, 5--22.
\bibitem{Cheridito-et-al-06}
Cheridito, P., Delbaen, F. and Kupper, M. (2006). 
Dynamic monetary risk measures for bounded discrete-time processes,  
\textit{Electronic Journal of Probability} 11, 57--106.  
\bibitem{engsner2017}
Engsner, H., Lindholm, M., Lindskog, F. (2017). 
Insurance valuation: a computable multi-period cost-of-capital approach,  
\textit{Insurance: Mathematics and Economics} 72, 250--264.
\bibitem{foell2016}
F\"ollmer, H., Schied, A. 
\textit{Stochastic Finance: An Introduction in Discrete Time}. De Gruyter, Berlin, Boston, 2016.
\bibitem{JobertRogers-08}
Jobert, A. and Rogers, L.C.G. (2008).
Valuations and dynamic convex risk measures,
\textit{Mathematical Finance} 18(1), 1--22.
\bibitem{Kallenberg-02}
Kallenberg, O. 
Foundations of Modern Probability, 2nd edn. Springer, 2002. 
\bibitem{mhr11}
M\"ohr, C. (2011). 
Market-consistent valuation of insurance liabilities by cost of capital,  
\textit{ASTIN Bulletin} 41(2), 315--341.
\bibitem{Pelsser-SG-16}
Pelsser, A. and Salahnejhad Ghalehjooghi, A. (2016).
Time-consistent actuarial valuations, 
\textit{Insurance: Mathematics and Economics} 66, 97--112. 
\bibitem{Pratt-60}
Pratt, J.W. (1960).
On interchanging limits and integrals,  
\textit{Annals of Mathematical Statistics} 31(1), 74--77. 
\bibitem{SalzmannWuthrich-10}
Salzmann, R. and W\"uthrich, M.V. (2010).
Cost-of-Capital margin for general insurance liability runoff, 
\textit{ASTIN Bulletin} 40(2), 415--451.
\bibitem{Sweeting-89}
Sweeting, T.J. (1989).
On conditional weak convergence,  
\textit{Journal of Theoretical Probability} 2, 461--474.
\end{thebibliography}
\end{document}